  \newcommand{\showoptional}{1}
  \newcommand{\ismain}{0}
  \newcommand{\lecturenotes}{0}
\font\omding=omding
\tikzstyle{env}=[copoint,regular polygon rotate=0,minimum width=0.2cm, fill=black]
\tikzstyle{probs}=[shape=semicircle,fill=white,draw=black,shape border rotate=180,minimum width=1.2cm]
\tikzstyle{every picture}=[baseline=-0.25em,scale=0.5]
\tikzstyle{dotpic}=[] 
\tikzstyle{diredges}=[every to/.style={diredge}]
\tikzstyle{math matrix}=[matrix of math nodes,left delimiter=(,right delimiter=),inner sep=2pt,column sep=1em,row sep=0.5em,nodes={inner sep=0pt},text height=1.5ex, text depth=0.25ex]
\tikzstyle{inline text}=[text height=1.5ex, text depth=0.25ex,yshift=0.5mm]
\tikzstyle{label}=[font=\footnotesize,text height=1.5ex, text depth=0.25ex,yshift=0.5mm]
\tikzstyle{left label}=[label,anchor=east,xshift=1.5mm]
\tikzstyle{right label}=[label,anchor=west,xshift=-1.5mm]
\tikzstyle{braceedge}=[decorate,decoration={brace,amplitude=2mm,raise=-1mm}]
\tikzstyle{small braceedge}=[decorate,decoration={brace,amplitude=1mm,raise=-1mm}]
\tikzstyle{doubled}=[line width=1.6pt] 
\tikzstyle{boldedge}=[doubled,shorten <=-0.17mm,shorten >=-0.17mm]
\tikzstyle{boldedgegray}=[doubled,gray,shorten <=-0.17mm,shorten >=-0.17mm]
\tikzstyle{singleedgegray}=[gray]
\tikzstyle{semidoubled}=[line width=1.4pt] 
\tikzstyle{semiboldedgegray}=[semidoubled,gray,shorten <=-0.17mm,shorten >=-0.17mm]
\tikzstyle{boxedge}=[semiboldedgegray]
\tikzstyle{boldedgedashed}=[very thick,dashed,shorten <=-0.17mm,shorten >=-0.17mm]
\tikzstyle{vboldedgedashed}=[doubled,dashed,shorten <=-0.17mm,shorten >=-0.17mm]
\tikzstyle{left hook arrow}=[left hook-latex]
\tikzstyle{right hook arrow}=[right hook-latex]
\tikzstyle{sembracket}=[line width=0.5pt,shorten <=-0.07mm,shorten >=-0.07mm]
\tikzstyle{causal edge}=[->,thick,gray]
\tikzstyle{causal nondir}=[thick,gray]
\tikzstyle{timeline}=[thick,gray, dashed]
\tikzstyle{cedge}=[<->,thick,gray!70!white]
\tikzstyle{empty diagram}=[draw=gray!40!white,dashed,shape=rectangle,minimum width=1cm,minimum height=1cm]
\tikzstyle{empty diagram small}=[draw=gray!50!white,dashed,shape=rectangle,minimum width=0.6cm,minimum height=0.5cm]
\tikzstyle{dot}=[inner sep=0mm,minimum width=2mm,minimum height=2mm,draw,shape=circle]  
\tikzstyle{Wsquare}=[white dot, shape=regular polygon, rounded corners=0.8 mm, minimum size=3.3 mm, regular polygon sides=3, outer sep=-0.2mm]
\tikzstyle{Wsquareadj}=[white dot, shape=regular polygon, rounded corners=0.8 mm, minimum size=3.3 mm, regular polygon sides=3, outer sep=-0.2mm, regular polygon rotate=180]
\tikzstyle{ddot}=[inner sep=0mm, doubled, minimum width=2.5mm,minimum height=2.5mm,draw,shape=circle]
\tikzstyle{black dot}=[dot,fill=black]
\tikzstyle{white dot}=[dot,fill=white,,text depth=-0.2mm]
\tikzstyle{white Wsquare}=[Wsquare,fill=gray,,text depth=-0.2mm]
\tikzstyle{white Wsquareadj}=[Wsquareadj,fill=white,,text depth=-0.2mm]
\tikzstyle{green dot}=[white dot] 
\tikzstyle{gray dot}=[dot,fill=gray!40!white,,text depth=-0.2mm]
\tikzstyle{red dot}=[gray dot] 
\tikzstyle{black ddot}=[ddot,fill=black]
\tikzstyle{white ddot}=[ddot,fill=white]
\tikzstyle{gray ddot}=[ddot,fill=gray!40!white]
\tikzstyle{gray edge}=[gray!60!white]
\tikzstyle{small dot}=[inner sep=0.5mm,minimum width=0pt,minimum height=0pt,draw,shape=circle]
\tikzstyle{small black dot}=[small dot,fill=black]
\tikzstyle{small white dot}=[small dot,fill=white]
\tikzstyle{small gray dot}=[small dot,fill=gray!40!white]
\tikzstyle{causal dot}=[inner sep=0.4mm,minimum width=0pt,minimum height=0pt,draw=white,shape=circle,fill=gray!40!white]
\tikzstyle{phase dimensions}=[minimum size=5mm,font=\footnotesize,rectangle,rounded corners=2.5mm,inner sep=0.2mm,outer sep=-2mm]
\tikzstyle{dphase dimensions}=[minimum size=5mm,font=\footnotesize,rectangle,rounded corners=2.5mm,inner sep=0.2mm,outer sep=-2mm]
\tikzstyle{white phase dot}=[dot,fill=white,phase dimensions]
\tikzstyle{white phase ddot}=[ddot,fill=white,dphase dimensions]
\tikzstyle{white rect ddot}=[draw=black,fill=white,doubled,minimum size=5mm,font=\footnotesize,rectangle,rounded corners=2.5mm,inner sep=0.2mm]
\tikzstyle{gray rect ddot}=[draw=black,fill=gray!40!white,doubled,minimum size=6mm,font=\footnotesize,rectangle,rounded corners=3mm]
\tikzstyle{gray phase dot}=[dot,fill=gray!40!white,phase dimensions]
\tikzstyle{gray phase ddot}=[ddot,fill=gray!40!white,dphase dimensions]
\tikzstyle{grey phase dot}=[gray phase dot]
\tikzstyle{grey phase ddot}=[gray phase ddot]
\tikzstyle{small phase dimensions}=[minimum size=4mm,font=\tiny,rectangle,rounded corners=2mm,inner sep=0.2mm,outer sep=-2mm]
\tikzstyle{small dphase dimensions}=[minimum size=4mm,font=\tiny,rectangle,rounded corners=2mm,inner sep=0.2mm,outer sep=-2mm]
\tikzstyle{small gray phase dot}=[dot,fill=gray!40!white,small phase dimensions]
\tikzstyle{small gray phase ddot}=[ddot,fill=gray!40!white,small dphase dimensions]
\tikzstyle{small map}=[draw,shape=rectangle,minimum height=4mm,minimum width=4mm,fill=white]
\tikzstyle{cnot}=[fill=white,shape=circle,inner sep=-1.4pt]
\tikzstyle{asym hadamard}=[fill=white,draw,shape=NEbox,inner sep=0.6mm,font=\footnotesize,minimum height=4mm]
\tikzstyle{asym hadamard conj}=[fill=white,draw,shape=NWbox,inner sep=0.6mm,font=\footnotesize,minimum height=4mm]
\tikzstyle{asym hadamard dag}=[fill=white,draw,shape=SEbox,inner sep=0.6mm,font=\footnotesize,minimum height=4mm]
\tikzstyle{hadamard}=[fill=white,draw,inner sep=0.6mm,font=\footnotesize,minimum height=4mm,minimum width=4mm]
\tikzstyle{small hadamard}=[fill=white,draw,inner sep=0.6mm,minimum height=1.5mm,minimum width=1.5mm]
\tikzstyle{small hadamard rotate}=[small hadamard,rotate=45]
\tikzstyle{dhadamard}=[hadamard,doubled]
\tikzstyle{small dhadamard}=[small hadamard,doubled]
\tikzstyle{small dhadamard rotate}=[small hadamard rotate,doubled]
\tikzstyle{antipode}=[white dot,inner sep=0.3mm,font=\footnotesize]
\tikzstyle{scalar}=[diamond,draw,inner sep=0.5pt,font=\small]
\tikzstyle{dscalar}=[diamond,doubled, draw,inner sep=0.5pt,font=\small]
\tikzstyle{small box}=[rectangle,inline text,fill=white,draw,minimum height=5mm,yshift=-0.5mm,minimum width=5mm,font=\small]
\tikzstyle{small gray box}=[small box,fill=gray!30]
\tikzstyle{medium box}=[rectangle,inline text,fill=white,draw,minimum height=5mm,yshift=-0.5mm,minimum width=10mm,font=\small]
\tikzstyle{square box}=[small box] 
\tikzstyle{medium gray box}=[small box,fill=gray!30]
\tikzstyle{semilarge box}=[rectangle,inline text,fill=white,draw,minimum height=5mm,yshift=-0.5mm,minimum width=12.5mm,font=\small]
\tikzstyle{large box}=[rectangle,inline text,fill=white,draw,minimum height=5mm,yshift=-0.5mm,minimum width=15mm,font=\small]
\tikzstyle{large gray box}=[small box,fill=gray!30]
\tikzstyle{Bayes box}=[rectangle,fill=black,draw, minimum height=3mm, minimum width=3mm]
\tikzstyle{gray square point}=[small box,fill=gray!50]
\tikzstyle{dphase box white}=[dhadamard]
\tikzstyle{dphase box gray}=[dhadamard,fill=gray!50!white]
\tikzstyle{phase box white}=[hadamard]
\tikzstyle{phase box gray}=[hadamard,fill=gray!50!white]
\tikzstyle{point}=[regular polygon,regular polygon sides=3,draw,scale=0.75,inner sep=-0.5pt,minimum width=9mm,fill=white,regular polygon rotate=180]
\tikzstyle{point nosep}=[regular polygon,regular polygon sides=3,draw,scale=0.75,inner sep=-2pt,minimum width=9mm,fill=white,regular polygon rotate=180]
\tikzstyle{copoint}=[regular polygon,regular polygon sides=3,draw,scale=0.75,inner sep=-0.5pt,minimum width=9mm,fill=white]
\tikzstyle{dpoint}=[point,doubled]
\tikzstyle{dcopoint}=[copoint,doubled]
\tikzstyle{pointgrow}=[shape=cornerpoint,kpoint common,scale=0.75,inner sep=3pt]
\tikzstyle{pointgrow dag}=[shape=cornercopoint,kpoint common,scale=0.75,inner sep=3pt]
\tikzstyle{wide copoint}=[fill=white,draw,shape=isosceles triangle,shape border rotate=90,isosceles triangle stretches=true,inner sep=0pt,minimum width=1.5cm,minimum height=6.12mm]
\tikzstyle{wide point}=[fill=white,draw,shape=isosceles triangle,shape border rotate=-90,isosceles triangle stretches=true,inner sep=0pt,minimum width=1.5cm,minimum height=6.12mm,yshift=-0.0mm]
\tikzstyle{wide point plus}=[fill=white,draw,shape=isosceles triangle,shape border rotate=-90,isosceles triangle stretches=true,inner sep=0pt,minimum width=1.74cm,minimum height=7mm,yshift=-0.0mm]
\tikzstyle{wide dpoint}=[fill=white,doubled,draw,shape=isosceles triangle,shape border rotate=-90,isosceles triangle stretches=true,inner sep=0pt,minimum width=1.5cm,minimum height=6.12mm,yshift=-0.0mm]
\tikzstyle{tinypoint}=[regular polygon,regular polygon sides=3,draw,scale=0.55,inner sep=-0.15pt,minimum width=6mm,fill=white,regular polygon rotate=180] 
\tikzstyle{white point}=[point]
\tikzstyle{white dpoint}=[dpoint]
\tikzstyle{green point}=[white point] 
\tikzstyle{white copoint}=[copoint]
\tikzstyle{gray point}=[point,fill=gray!40!white]
\tikzstyle{gray dpoint}=[gray point,doubled]
\tikzstyle{red point}=[gray point] 
\tikzstyle{gray copoint}=[copoint,fill=gray!40!white]
\tikzstyle{gray dcopoint}=[gray copoint,doubled]
\tikzstyle{white point guide}=[regular polygon,regular polygon sides=3,font=\scriptsize,draw,scale=0.65,inner sep=-0.5pt,minimum width=9mm,fill=white,regular polygon rotate=180]
\tikzstyle{black point}=[point,fill=black,font=\color{white}]
\tikzstyle{black copoint}=[copoint,fill=black,font=\color{white}]
\tikzstyle{tiny gray point}=[tinypoint,fill=gray!40!white]
\tikzstyle{diredge}=[->]
\tikzstyle{ddiredge}=[<->]
\tikzstyle{rdiredge}=[<-]
\tikzstyle{thickdiredge}=[->, very thick]
\tikzstyle{pointer edge}=[->,very thick,gray]
\tikzstyle{pointer edge part}=[very thick,gray]
\tikzstyle{dashed edge}=[dashed]
\tikzstyle{thick dashed edge}=[very thick,dashed]
\tikzstyle{thick gray dashed edge}=[thick dashed edge,gray!40]
\tikzstyle{thick map edge}=[very thick,|->]
\newcommand{\boxshape}[3]{%
\pgfdeclareshape{#1}{
\inheritsavedanchors[from=rectangle] 
\inheritanchorborder[from=rectangle]
\inheritanchor[from=rectangle]{center}
\inheritanchor[from=rectangle]{north}
\inheritanchor[from=rectangle]{south}
\inheritanchor[from=rectangle]{west}
\inheritanchor[from=rectangle]{east}
\backgroundpath{
\southwest \pgf@xa=\pgf@x \pgf@ya=\pgf@y
\northeast \pgf@xb=\pgf@x \pgf@yb=\pgf@y

\@tempdima=#2
\@tempdimb=#3

\pgfpathmoveto{\pgfpoint{\pgf@xa - 5pt + \@tempdima}{\pgf@ya}}
\pgfpathlineto{\pgfpoint{\pgf@xa - 5pt - \@tempdima}{\pgf@yb}}
\pgfpathlineto{\pgfpoint{\pgf@xb + 5pt + \@tempdimb}{\pgf@yb}}
\pgfpathlineto{\pgfpoint{\pgf@xb + 5pt - \@tempdimb}{\pgf@ya}}
\pgfpathlineto{\pgfpoint{\pgf@xa - 5pt + \@tempdima}{\pgf@ya}}
\pgfpathclose
}
}}
\tikzstyle{cloud}=[shape=cloud,draw,minimum width=1.5cm,minimum height=1.5cm]
\tikzstyle{map}=[draw,shape=NEbox,inner sep=2pt,minimum height=6mm,fill=white]
\tikzstyle{dashedmap}=[draw,dashed,shape=NEbox,inner sep=2pt,minimum height=6mm,fill=white]
\tikzstyle{mapdag}=[draw,shape=SEbox,inner sep=2pt,minimum height=6mm,fill=white]
\tikzstyle{mapadj}=[draw,shape=SEbox,inner sep=2pt,minimum height=6mm,fill=white]
\tikzstyle{maptrans}=[draw,shape=SWbox,inner sep=2pt,minimum height=6mm,fill=white]
\tikzstyle{mapconj}=[draw,shape=NWbox,inner sep=2pt,minimum height=6mm,fill=white]
\tikzstyle{medium map}=[draw,shape=NEbox,inner sep=2pt,minimum height=6mm,fill=white,minimum width=7mm]
\tikzstyle{medium map dag}=[draw,shape=SEbox,inner sep=2pt,minimum height=6mm,fill=white,minimum width=7mm]
\tikzstyle{medium map adj}=[draw,shape=SEbox,inner sep=2pt,minimum height=6mm,fill=white,minimum width=7mm]
\tikzstyle{medium map trans}=[draw,shape=SWbox,inner sep=2pt,minimum height=6mm,fill=white,minimum width=7mm]
\tikzstyle{medium map conj}=[draw,shape=NWbox,inner sep=2pt,minimum height=6mm,fill=white,minimum width=7mm]
\tikzstyle{semilarge map}=[draw,shape=NEbox,inner sep=2pt,minimum height=6mm,fill=white,minimum width=9.5mm]
\tikzstyle{semilarge map trans}=[draw,shape=SWbox,inner sep=2pt,minimum height=6mm,fill=white,minimum width=9.5mm]
\tikzstyle{semilarge map adj}=[draw,shape=SEbox,inner sep=2pt,minimum height=6mm,fill=white,minimum width=9.5mm]
\tikzstyle{semilarge map dag}=[draw,shape=SEbox,inner sep=2pt,minimum height=6mm,fill=white,minimum width=9.5mm]
\tikzstyle{semilarge map conj}=[draw,shape=NWbox,inner sep=2pt,minimum height=6mm,fill=white,minimum width=9.5mm]
\tikzstyle{large map}=[draw,shape=NEbox,inner sep=2pt,minimum height=6mm,fill=white,minimum width=12mm]
\tikzstyle{large map conj}=[draw,shape=NWbox,inner sep=2pt,minimum height=6mm,fill=white,minimum width=12mm]
\tikzstyle{very large map}=[draw,shape=NEbox,inner sep=2pt,minimum height=6mm,fill=white,minimum width=17mm]
\tikzstyle{medium dmap}=[draw,doubled,shape=NEbox,inner sep=2pt,minimum height=6mm,fill=white,minimum width=7mm]
\tikzstyle{medium dmap dag}=[draw,doubled,shape=SEbox,inner sep=2pt,minimum height=6mm,fill=white,minimum width=7mm]
\tikzstyle{medium dmap adj}=[draw,doubled,shape=SEbox,inner sep=2pt,minimum height=6mm,fill=white,minimum width=7mm]
\tikzstyle{medium dmap trans}=[draw,doubled,shape=SWbox,inner sep=2pt,minimum height=6mm,fill=white,minimum width=7mm]
\tikzstyle{medium dmap conj}=[draw,doubled,shape=NWbox,inner sep=2pt,minimum height=6mm,fill=white,minimum width=7mm]
\tikzstyle{semilarge dmap}=[draw,doubled,shape=NEbox,inner sep=2pt,minimum height=6mm,fill=white,minimum width=9.5mm]
\tikzstyle{semilarge dmap trans}=[draw,doubled,shape=SWbox,inner sep=2pt,minimum height=6mm,fill=white,minimum width=9.5mm]
\tikzstyle{semilarge dmap adj}=[draw,doubled,shape=SEbox,inner sep=2pt,minimum height=6mm,fill=white,minimum width=9.5mm]
\tikzstyle{semilarge dmap dag}=[draw,doubled,shape=SEbox,inner sep=2pt,minimum height=6mm,fill=white,minimum width=9.5mm]
\tikzstyle{semilarge dmap conj}=[draw,doubled,shape=NWbox,inner sep=2pt,minimum height=6mm,fill=white,minimum width=9.5mm]
\tikzstyle{large dmap}=[draw,doubled,shape=NEbox,inner sep=2pt,minimum height=6mm,fill=white,minimum width=12mm]
\tikzstyle{large dmap conj}=[draw,doubled,shape=NWbox,inner sep=2pt,minimum height=6mm,fill=white,minimum width=12mm]
\tikzstyle{large dmap trans}=[draw,doubled,shape=SWbox,inner sep=2pt,minimum height=6mm,fill=white,minimum width=12mm]
\tikzstyle{large dmap adj}=[draw,doubled,shape=SEbox,inner sep=2pt,minimum height=6mm,fill=white,minimum width=12mm]
\tikzstyle{large dmap dag}=[draw,doubled,shape=SEbox,inner sep=2pt,minimum height=6mm,fill=white,minimum width=12mm]
\tikzstyle{very large dmap}=[draw,doubled,shape=NEbox,inner sep=2pt,minimum height=6mm,fill=white,minimum width=19.5mm]
\tikzstyle{muxbox}=[draw,shape=rectangle,minimum height=3mm,minimum width=3mm,fill=white]
\tikzstyle{dmuxbox}=[muxbox,doubled]
\tikzstyle{box}=[draw,shape=rectangle,inner sep=2pt,minimum height=6mm,minimum width=6mm,fill=white]
\tikzstyle{dbox}=[draw,doubled,shape=rectangle,inner sep=2pt,minimum height=6mm,minimum width=6mm,fill=white]
\tikzstyle{dmap}=[draw,doubled,shape=NEbox,inner sep=2pt,minimum height=6mm,fill=white]
\tikzstyle{dmapdag}=[draw,doubled,shape=SEbox,inner sep=2pt,minimum height=6mm,fill=white]
\tikzstyle{dmapadj}=[draw,doubled,shape=SEbox,inner sep=2pt,minimum height=6mm,fill=white]
\tikzstyle{dmaptrans}=[draw,doubled,shape=SWbox,inner sep=2pt,minimum height=6mm,fill=white]
\tikzstyle{dmapconj}=[draw,doubled,shape=NWbox,inner sep=2pt,minimum height=6mm,fill=white]
\tikzstyle{ddmap}=[draw,doubled,dashed,shape=NEbox,inner sep=2pt,minimum height=6mm,fill=white]
\tikzstyle{ddmapdag}=[draw,doubled,dashed,shape=SEbox,inner sep=2pt,minimum height=6mm,fill=white]
\tikzstyle{ddmapadj}=[draw,doubled,dashed,shape=SEbox,inner sep=2pt,minimum height=6mm,fill=white]
\tikzstyle{ddmaptrans}=[draw,doubled,dashed,shape=SWbox,inner sep=2pt,minimum height=6mm,fill=white]
\tikzstyle{ddmapconj}=[draw,doubled,dashed,shape=NWbox,inner sep=2pt,minimum height=6mm,fill=white]
\tikzstyle{smap}=[draw,shape=sNEbox,fill=white]
\tikzstyle{smapdag}=[draw,shape=sSEbox,fill=white]
\tikzstyle{smapadj}=[draw,shape=sSEbox,fill=white]
\tikzstyle{smaptrans}=[draw,shape=sSWbox,fill=white]
\tikzstyle{smapconj}=[draw,shape=sNWbox,fill=white]
\tikzstyle{dsmap}=[draw,dashed,shape=sNEbox,fill=white]
\tikzstyle{dsmapdag}=[draw,dashed,shape=sSEbox,fill=white]
\tikzstyle{dsmaptrans}=[draw,dashed,shape=sSWbox,fill=white]
\tikzstyle{dsmapconj}=[draw,dashed,shape=sNWbox,fill=white]
\tikzstyle{mmap}=[draw,shape=mNEbox]
\tikzstyle{mmapdag}=[draw,shape=mSEbox]
\tikzstyle{mmaptrans}=[draw,shape=mSWbox]
\tikzstyle{mmapconj}=[draw,shape=mNWbox]
\tikzstyle{mmapgray}=[draw,fill=gray!40!white,shape=mNEbox]
\tikzstyle{smapgray}=[draw,fill=gray!40!white,shape=sNEbox]
\pgfmathsetmacro{\pgf@shorten@left}{\pgfkeysvalueof{/tikz/shorten left}}
\pgfmathsetmacro{\pgf@shorten@right}{\pgfkeysvalueof{/tikz/shorten right}}
\pgfmathsetmacro{\pgf@shorten@left}{\pgfkeysvalueof{/tikz/shorten left}}
\pgfmathsetmacro{\pgf@shorten@right}{\pgfkeysvalueof{/tikz/shorten right}}
\tikzstyle{kpoint common}=[draw,fill=white,inner sep=1pt,minimum height=4mm]
\tikzstyle{kpoint sc}=[shape=cornerpoint,kpoint common]
\tikzstyle{kpoint adjoint sc}=[shape=cornercopoint,kpoint common]
\tikzstyle{kpoint}=[shape=cornerpoint,shorten left=5pt,kpoint common]
\tikzstyle{kpoint adjoint}=[shape=cornercopoint,shorten left=5pt,kpoint common]
\tikzstyle{kpoint conjugate}=[shape=cornerpoint,shorten right=5pt,kpoint common]
\tikzstyle{kpoint transpose}=[shape=cornercopoint,shorten right=5pt,kpoint common]
\tikzstyle{kpoint symm}=[shape=cornerpoint,shorten left=5pt,shorten right=5pt,kpoint common]
\tikzstyle{wide kpoint sc}=[shape=cornerpoint,kpoint common, minimum width=1 cm]
\tikzstyle{wide kpointdag sc}=[shape=cornercopoint,kpoint common, minimum width=1 cm]
\tikzstyle{black kpoint}=[shape=cornerpoint,shorten left=5pt,kpoint common,fill=black,font=\color{white}]
\tikzstyle{black kpoint sm}=[shape=cornerpoint,shorten left=5pt,kpoint common,fill=black,font=\color{white},scale=0.75]
\tikzstyle{black kpoint adjoint}=[shape=cornercopoint,shorten left=5pt,kpoint common,fill=black,font=\color{white}]
\tikzstyle{black kpointadj}=[shape=cornercopoint,shorten left=5pt,kpoint common,fill=black,font=\color{white}]
\tikzstyle{black kpointadj sm}=[shape=cornercopoint,shorten left=5pt,kpoint common,fill=black,font=\color{white},scale=0.75]
\tikzstyle{black dkpoint}=[shape=cornerpoint,shorten left=5pt,kpoint common,fill=black, doubled,font=\color{white}]
\tikzstyle{black dkpoint adjoint}=[shape=cornercopoint,shorten left=5pt,kpoint common,fill=black, doubled,font=\color{white}]
\tikzstyle{black dkpointadj}=[shape=cornercopoint,shorten left=5pt,kpoint common,fill=black, doubled,font=\color{white}]
\tikzstyle{black dkpoint sm}=[shape=cornerpoint,shorten left=5pt,kpoint common,fill=black, doubled,font=\color{white},scale=0.75]
\tikzstyle{black dkpointadj sm}=[shape=cornercopoint,shorten left=5pt,kpoint common,fill=black, doubled,font=\color{white},scale=0.75] 
\tikzstyle{kpointdag}=[kpoint adjoint]
\tikzstyle{kpointadj}=[kpoint adjoint]
\tikzstyle{kpointconj}=[kpoint conjugate]
\tikzstyle{kpointtrans}=[kpoint transpose]
\tikzstyle{big kpoint}=[kpoint, minimum width=1.2 cm, minimum height=8mm, inner sep=4pt, text depth=3mm]
\tikzstyle{wide kpoint}=[kpoint, minimum width=1 cm, inner sep=2pt]
\tikzstyle{wide kpointdag}=[kpointdag, minimum width=1 cm, inner sep=2pt]
\tikzstyle{wide kpointconj}=[kpointconj, minimum width=1 cm, inner sep=2pt]
\tikzstyle{wide kpointtrans}=[kpointtrans, minimum width=1 cm, inner sep=2pt]
\tikzstyle{wider kpoint}=[kpoint, minimum width=1.25 cm, inner sep=2pt]
\tikzstyle{wider kpointdag}=[kpointdag, minimum width=1.25 cm, inner sep=2pt]
\tikzstyle{wider kpointconj}=[kpointconj, minimum width=1.25 cm, inner sep=2pt]
\tikzstyle{wider kpointtrans}=[kpointtrans, minimum width=1.25 cm, inner sep=2pt]
\tikzstyle{gray kpoint}=[kpoint,fill=gray!50!white]
\tikzstyle{gray kpointdag}=[kpointdag,fill=gray!50!white]
\tikzstyle{gray kpointadj}=[kpointadj,fill=gray!50!white]
\tikzstyle{gray kpointconj}=[kpointconj,fill=gray!50!white]
\tikzstyle{gray kpointtrans}=[kpointtrans,fill=gray!50!white]
\tikzstyle{gray dkpoint}=[kpoint,fill=gray!50!white,doubled]
\tikzstyle{gray dkpointdag}=[kpointdag,fill=gray!50!white,doubled]
\tikzstyle{gray dkpointadj}=[kpointadj,fill=gray!50!white,doubled]
\tikzstyle{gray dkpointconj}=[kpointconj,fill=gray!50!white,doubled]
\tikzstyle{gray dkpointtrans}=[kpointtrans,fill=gray!50!white,doubled]
\tikzstyle{white label}=[draw,fill=white,rectangle,inner sep=0.7 mm]
\tikzstyle{gray label}=[draw,fill=gray!50!white,rectangle,inner sep=0.7 mm]
\tikzstyle{black label}=[draw,fill=black,rectangle,inner sep=0.7 mm]
\tikzstyle{dkpoint}=[kpoint,doubled]
\tikzstyle{wide dkpoint}=[wide kpoint,doubled]
\tikzstyle{dkpointdag}=[kpoint adjoint,doubled]
\tikzstyle{wide dkpointdag}=[wide kpointdag,doubled]
\tikzstyle{dkcopoint}=[kpoint adjoint,doubled]
\tikzstyle{dkpointadj}=[kpoint adjoint,doubled]
\tikzstyle{dkpointconj}=[kpoint conjugate,doubled]
\tikzstyle{dkpointtrans}=[kpoint transpose,doubled]
\tikzstyle{kscalar}=[kpoint common, shape=EBox, inner xsep=-1pt, inner ysep=3pt,font=\small]
\tikzstyle{kscalarconj}=[kpoint common, shape=WBox, inner xsep=-1pt, inner ysep=3pt,font=\small]
\tikzstyle{spekpoint}=[kpoint sc,minimum height=5mm,inner sep=3pt]
\tikzstyle{spekcopoint}=[kpoint adjoint sc,minimum height=5mm,inner sep=3pt]
\tikzstyle{dspekpoint}=[spekpoint,doubled]
\tikzstyle{dspekcopoint}=[spekcopoint,doubled]
 \tikzstyle{upground}=[circuit ee IEC,thick,ground,rotate=90,scale=2.5]
 \tikzstyle{downground}=[circuit ee IEC,thick,ground,rotate=-90,scale=2.5]
 \tikzstyle{bigground}=[regular polygon,regular polygon sides=3,draw=gray,scale=0.50,inner sep=-0.5pt,minimum width=10mm,fill=gray]
\tikzstyle{arrs}=[-latex,font=\small,auto]
\tikzstyle{arrow plain}=[arrs]
\tikzstyle{arrow dashed}=[dashed,arrs]
\tikzstyle{arrow bold}=[very thick,arrs]
\tikzstyle{arrow hide}=[draw=white!0,-]
\tikzstyle{arrow reverse}=[latex-]
\tikzstyle{cdnode}=[]
\newcommand{\smalldotonly}[1]{%
\,\begin{tikzpicture}[dotpic,yshift=-0.15mm]
\node [#1] (a) at (0,0) {};
\end{tikzpicture}\,}
\newcommand{\smallblackdot}{\smalldotonly{smalldot}\xspace}
\newcommand{\smallwhitedot}{\smalldotonly{small white dot}\xspace}
\newcommand{\smallgraydot}{\smalldotonly{small gray dot}\xspace}
\let\olddagger\dagger
\renewcommand{\dagger}{\ensuremath{\olddagger}\xspace}
\theoremstyle{definition}
\newtheorem{theorem}{Theorem}[section]
\newtheorem*{theorem*}{Theorem}
\newtheorem{corollary}[theorem]{Corollary}
\newtheorem{proposition}[theorem]{Proposition}
\newtheorem{definition}[theorem]{Definition}
\newtheorem{example}[theorem]{Example}
\newtheorem{example*}[theorem]{Example*}
\newtheorem{examples*}[theorem]{Examples*}
\newtheorem{remark}[theorem]{Remark}
\newtheorem{remark*}[theorem]{Remark*}
\newtheorem{exer}[theorem]{Exercise}
\newtheorem{exeropt}[theorem]{Exercise}
\newtheorem{exer*}[theorem]{Exercise*}
\newtheorem{exer*}[theorem]{Exercise*}
\newtheoremstyle{exercise}{3pt}{3pt}{\color{red}}{}{\bf}{}{.5em}{}
\theoremstyle{exercise}
\newcommand{\TODO}[1]{\marginpar{\scriptsize\bB \textbf{TODO:} #1\e}}
\newcommand{\TODOa}[1]{\marginpar{\scriptsize\bM \textbf{TODO:} #1\e}}
\newcommand{\TODOb}[1]{\marginpar{\scriptsize\bB \textbf{TODO:} #1\e}}
\newcommand{\COMMa}[1]{\marginpar{\scriptsize\bM \textbf{COMM:} #1\e}}
\newcommand{\COMMb}[1]{\marginpar{\scriptsize\bB \textbf{COMM:} #1\e}}
\newcommand{\CHECK}[1]{\marginpar{\scriptsize\bR \textbf{CHECK:} #1\e}}
\def\bR{\begin{color}{red}} 
\def\bB{\begin{color}{blue}}
\def\bM{\begin{color}{magenta}}
\def\bC{\begin{color}{cyan}}
\def\bW{\begin{color}{white}}
\def\bBl{\begin{color}{black}} 
\def\bG{\begin{color}{green}}
\def\bY{\begin{color}{yellow}}
\def\e{\end{color}\xspace}
\newcommand{\bit}{\begin{itemize}}
\newcommand{\eit}{\end{itemize}\par\noindent}
\newcommand{\ben}{\begin{enumerate}}
\newcommand{\een}{\end{enumerate}\par\noindent}
\newcommand{\beq}{\begin{equation}}
\newcommand{\eeq}{\end{equation}\par\noindent}
\newcommand{\beqa}{\begin{eqnarray*}}
\newcommand{\eeqa}{\end{eqnarray*}\par\noindent}
\newcommand{\beqn}{\begin{eqnarray}}
\newcommand{\eeqn}{\end{eqnarray}\par\noindent}
\renewcommand{\TODO}[1]{}
\renewcommand{\TODOa}[1]{}
\renewcommand{\TODOb}[1]{}
\renewcommand{\COMMa}[1]{}
\renewcommand{\COMMb}[1]{}
\renewcommand{\CHECK}[1]{}
\def\bR{\begin{color}{black}} 
\def\bB{\begin{color}{black}}
\def\bM{\begin{color}{black}}
\def\bC{\begin{color}{black}}
\def\bW{\begin{color}{black}}
\def\bG{\begin{color}{black}}
\def\bY{\begin{color}{black}}
\title{Categorical Quantum Mechanics I:\\  Causal Quantum Processes}                                                                                                                                                                     
\author[1]{Bob Coecke}
\author[2]{Aleks Kissinger}
\date{}
\affil[1]{Department of Computer Science, Oxford. {\tt coecke@cs.ox.ac.uk}}          
\affil[2]{iCIS, Radboud University, Nijmegen. {\tt aleks@cs.ru.nl}}
\begin{document}     

\maketitle

\begin{abstract}
We derive the category-theoretic backbone of quantum theory  from a process  ontology. More specifically, we treat quantum theory as a theory of systems, processes and their interactions. 

In this first part of a three-part overview, we first present a general theory of diagrams, and in particular, of string diagrams.  We discuss why diagrams are a very natural starting point for developing scientific theories.  Then we define process theories, and define a very general notion of quantum type. We show how our process  ontology  enables us to assert causality, that is, compatibility of quantum theory and relativity theory, and we prove the no-signalling theorem.   

Other notable contributions include new, elegant derivations of the no-broadcasting theorem, unitarity of evolution, and Stinespring dilation, all for any `quantum' type in a general class of process theories.        
\end{abstract}    

\section{Introduction}    

This chapter is the first  of a three-part  overview  on categorical quantum mechanics (CQM), an area of applied category-theory that over the past twelve years or so has become increasingly prominent within physics, mathematics and computer science, and even has spin-offs in other areas such as computational linguistics. Probably the most appealing feature of CQM is the use of diagrams, which are related to the usual Hilbert space model via symmetric monoidal categories and structures therein.  However, we have written this overview in such a way that no prior knowledge on category theory is required. In fact,  it can be seen as a first encounter with the relevant parts of category theory.      

We start with boxes and wires, which together make up diagrams.  The wires stand for systems, and the boxes stand for processes.  Symmetric monoidal categories then arise when endowing diagrams with operations of sequential and parallel composition. There are very good reasons to start with diagrams, rather than with traditional category-theoretic axioms, one being that the set-theoretic underpinning of category theory invokes an additional level of bureaucracy, namely dealing with such details as the bracketing of expressions, which has no counterpart in the reality that one aims to describe.  In other words, the traditional symbolic presentation of monoidal categories suffers from a substantial overhead as compared to its diagrammatic counterpart, to the extent that  monoidal category theory itself  becomes much simpler if one takes diagrams as  a starting point.  From this perspective, the role of the traditional presentation of  monoidal categories is reduced to providing a bridge to standard mathematical models, for example, the presentation of quantum theory using Hilbert space.

Next we define very general \em quantum types\em.  Despite this generality, we are able to derive important results that characterise the behaviour of quantum systems, most notably, the \em no-broadcasting theorem \em \cite{Nobroadcast}.  The diagrammatic formalism also makes it remarkably easy to assert  compliance with the theory of relativity, namely, in terms of a \em causality postulate \em \cite{chiri1}, which, in category-theoretic terms, simply boils down to the tensor unit being terminal \cite{Cnonsig}.  Again, important results follow straightforwardly, such as \em unitarity of evolution \em and \em Stinesping dilation\em.

\paragraph{Other overviews, surveys and tutorials on CQM.}  This three-part overview 
has a big brother, namely, a forthcoming textbook \cite{CKbook}  by the same authors. This three-part overview,  which amounts to some 70 pages, strives to be self-contained, but the book (which is about 12 times that size) will provide a more comprehensive introduction suitable for students and researchers in a wide variety of disciplines and levels of experience.

There is  another forthcoming book \cite{HVbook}, but rather than providing a complete presentation of quantum theory  from first principles, it puts  many of the core aspects of CQM studied in earlier papers in one place, and  also emphasises how the structures used in CQM  appear in other areas of mathematics.

A tutorial that provides a pedestrian introduction to the relevant category theory for CQM, and is complementary to this three-part one, is \cite{CatsII}.   
  
There have been shorter overviews on CQM before in the spirit of this one, e.g.~\cite{Kindergarten, ContPhys}, but the material simply wasn't ripe enough at that time for a fully-comprehensive presentation of quantum theory. 

\paragraph{Comparison to previous versions of CQM.}  The most prominent difference between this overview and the way CQM has been presented in the past is the fact that we take diagrams as our starting point, rather than standard category-theoretic structures and consistently introduce all new ingredients of the formalism in those terms.  While process ontology has played a motivational role throughout the entire CQM endeavour, see e.g.~\cite{JTF}, only now does every single ingredient emerge from this ontology.  This idea to start with diagrams as a primitive notion, even when a symbolic alternative is available, has also been advocated by Hardy \cite{HardyJTF}.

As compared to the paper by Abramsky and Coecke that initiated CQM \cite{AC1}, a  now well-established difference is  that we no longer rely on biproducts  (a non-diagrammatic concept) for the description of classical types.  They have been superseded  by special processes called \textit{spiders},  which allow one to express the characteristic processes associated with classical data (most importantly: `copy' and `delete') and reason about them diagrammatically \cite{CPav, CPaqpav}.

A more recent  innovation is the central role  now played by Chiribella, D'Ariano and Perinotti's causality postulate \cite{chiri1}.  In the past, issues of normalisation were mostly ignored in CQM.  Only recently it became apparent that the type of normalisation captured in the causality postulate is something structurally very fundamental  to processes which are physically realisable. Notably, all the results in Section \ref{sec:causality} of this chapter rely on it.  


\section{Process theories}

\begin{quote}
\em The art of progress is to preserve order amid change, and to preserve change amid order.\par \em \hfill    --- Alfred North Whitehead, Process and Reality, 1929.       
\end{quote}

\noindent
After his attempt to complete the set-theoretic foundations of mathematics in collaboration with Russell, Whitehead's venture into the natural sciences made him realise  that the traditional timeless ontology of substances, and not in the least their static set-theoretic underpinning, does not suit natural phenomena.  Instead, he claims, it is processes and their relationships which should underpin our understanding of those phenomena.

Can one turn this stance into a formal underpinning for natural sciences?  Category theory is a big step in that direction, but falls short of shaking off its set-theoretic  shackles.


\subsection{Processes as diagrams}\label{sec:procs-as-diagrams}

We shall use the term \em process \em to refer to anything that has zero or more inputs and zero or more outputs. For instance, the function:
\beq\label{eq:fstfunctf}
f:  {\mathbb R}\times {\mathbb R} \to {\mathbb R}:: (x, y) \mapsto  x^2 + y
\eeq
is a process which takes two real numbers as input and produces one real number as output. We represent such a process as a \textit{box} with some \textit{wires}  coming in the bottom to represent input systems and some wires coming out the top to represent output systems. For example, we could write the function  (\ref{eq:fstfunctf}) like this:  
\begin{equation}\label{eq:R-function-ex}
  \input{./figures/f-ex.tikz}
\end{equation}
The labels on wires are called \textit{system-types} or simply \textit{types}.

Similarly, a computer program is a process which takes some data (e.g. from memory) as input and produces some new data as output. For example, a program that sorts lists might look like this:
\ctikzfig{quicksort-ex} 
The following are also perfectly good processes:
\[
\input{./figures/physical-processes1.tikz}
\qquad\quad \ \, \ \input{./figures/physical-processes2.tikz}    
\qquad\qquad \input{./figures/physical-processes3.tikz} 
\]
Clearly the world around us is packed with processes!

We can \textit{wire together} simple processes to make more complicated processes,  which are described by \em diagrams\em:
\ctikzfig{compound-process}
We form diagrams by plugging the outputs of some processes into the inputs of others. This is allowed only if the types of the output and the input match.  For example, the following two processes: 
\ctikzfig{type-restriction1}
can be connected in some ways, but not in others, depending on the types of their wires:
\ctikzfig{type-restriction2}
This restriction on which wirings are allowed is an essential part of the language of diagrams,  in that it tells us when it makes sense to apply a process to a certain system and prevents occurrences like this:
\[
\quad\input{./figures/poosort.tikz} 
\]
which  probably wouldn't be very good for your computer!    

One thing to note is we haven't yet been too careful to say 
what a diagram actually \underline{is}. A complete description of a diagram consists of:  
\ben
\item what boxes it contains, and 
\item how those boxes are connected.  
\een
So the diagram refers to the `drawing' of boxes and wires without the interpretation of the diagram as a process. However, it makes no reference to where boxes are written on the page.  Hence, if two diagrams can be deformed into each other (without changing connections of course) then they are equal:  
\begin{equation}\label{eq:circuit-equiv}
  \input{./figures/circuit_diagram_equiv2_BOOK.tikz}
\end{equation}

When we think about interpreting the boxes in a diagram as processes, we are usually not interested in all possible processes, but rather in a certain class of related processes.  For example, practicioners of a particular scientific discipline will typically only study a particular class of processes: physical processes, chemical processes, biological processes, computational processes, etc. For that reason, we organise processes into \textit{process theories}. 

\begin{definition}\label{def:process-theory}
  A \textit{process theory} consists of:
  \begin{enumerate}
    \item[(i)] a collection $T$ of \textit{system-types} represented by wires,
    \item[(ii)] a  collection $P$ of \textit{processes} represented by boxes, where for each process in $P$ the input types and output types are taken from $T$, and
    \item[(iii)] a means of `wiring processes together'. That is, an operation that interprets a diagram of processes in $P$ as a process in $P$.
  \end{enumerate}
\end{definition}

\noindent In particular, (iii) guarantees that: 
\begin{center}
\em process theories are `closed under wiring processes together',  \em
\end{center}
since it is this operation that tells us what `wiring processes together' means. In some cases this operation consists of literally plugging things together with physical wires, like in the  theory of {\bf electrical devices}:  
\[
   \input{./figures/electricity-example2.tikz} \quad := \quad
  \raisebox{-2.7cm}{\includegraphics[width=2.4cm]{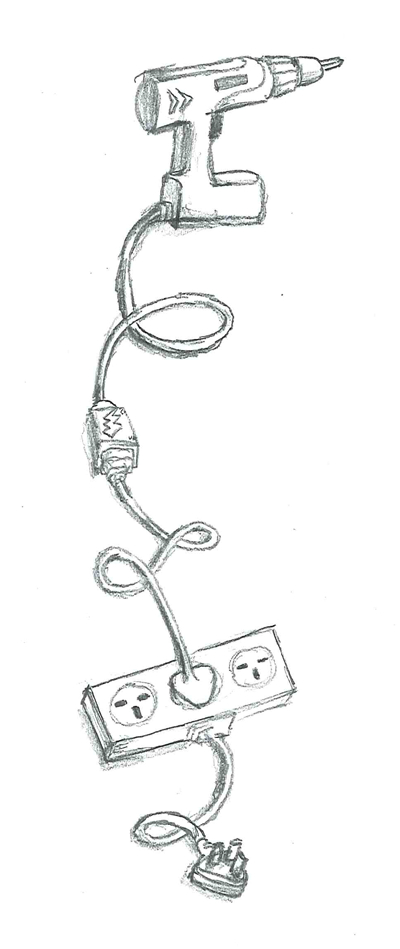}}
\]
In other cases this will require some more work, and sometimes there is more than one obvious choice available. We shall see below that in traditional mathematical practice one typically breaks down `wiring processes together' in two sub-operations: parallel composition and sequential composition of processes.

According to Definition~\ref{def:process-theory} a process theory tells us how to \textit{interpret} the boxes and wires in a diagram. But, crucially, by doing so it also tells us which diagrams consisting of those processes should be considered \textit{equal}.  

For example, suppose we define a simple process theory for \textbf{computer programs}, where the types are data-types (e.g. integers, booleans, lists, etc.) and the processes are computer programs. Then, consider a short program, which takes a list as input and sorts it. It might be defined this way (don't worry if you can't read the code, neither can half of the authors):
\[
\begin{tikzpicture}
	\begin{pgfonlayer}{nodelayer}
		\node [style=none] (0) at (0, -1.25) {};
		\node [style={medium box}] (1) at (0, 0) {\texttt{quicksort}};
		\node [style=none] (2) at (0, 1.25) {};
	\end{pgfonlayer}
	\begin{pgfonlayer}{edgelayer}
		\draw (0.center) to (1);
		\draw (1) to (2.center);
	\end{pgfonlayer}
\end{tikzpicture} \ \ \ := \ \ \ 
\begin{cases} 
  \,\verb!qs [] = []!\\
  \,\verb!qs (x :: xs) = !\\
  \,\verb!  qs [y | y <- xs; y < x] ++ [x] ++! \\
  \,\verb!  qs [y | y <- xs; y >= x]!
\end{cases}
\]
Wiring together programs means sending the output of one program to the input of another program. Taking two programs to be equal if they behave the same (disregarding some details like execution time, etc.), our process theory yields equations like this one:
\ctikzfig{quicksort_idempotent}
i.e.~sorting a list twice has the same effect as sorting it once. Unlike equation~\eqref{eq:circuit-equiv}, this is not just a matter of deforming one diagram into another, but actually represents a non-trivial equation between processes.

The reason we call a process theory a \textit{theory} is that it comes with lots of such equations, and these equations are precisely what allows us to draw conclusions about the processes we are studying.  Another thing one expects from a theory is that it makes predictions. In particular, one usually expects a theory to produce some numbers (e.g.~probabilities) that can verified by experiments. Toward that end, we first identify two special kinds of processes:
\bit
\item \em States \em are processes without any inputs. In  operational terms they are `preparation procedures'. We represent them as follows:
\[
\pointmap{\psi}
\]
\item
\em Effects \em are processes without any outputs. We have borrowed this terminology from quantum theory, where effects play a key role.  From now on we  represent them as follows:
\[
\copointmap{\,\pi\,} 
\]
\eit
When composing a state and an effect a third special kind of process arises which neither has inputs nor outputs, called a \em number\em.  Every process theory has at least one number, given by the `empty diagram':
\[ 
1 \ :=\ \emptydiag 
\]
We label this number `$1$' because combining the empty diagram with any other process yields the same process again.

It is natural to interpret the number arising from the composition of a state and an effect as the \textit{probability} that, given the system is in that state, the effect happens: 
\ctikzfig{state_test_paper}
We refer to this procedure as the \textit{generalised Born rule}.

In the introduction we announced that we would introduce category-theoretic definitions after we introduced the relevant diagrammatic notion.  However, the simple notion of a diagram introduced above, has a not so simple category-theoretic description.  We will now first introduce a more sophisticated notion of a diagram, which actually has a simpler category-theoretic counterpart, and this will serve as a stepping stone to defining the category-theoretic counterpart to the diagrams considered above.


\subsection{Circuit diagrams}

Given that  boxes represent  processes, we can define two basic \em composition operations \em on processes with the following interpretations:
\begin{align*}
  \textrm{$f \otimes g$}
  & := 
  \textrm{`process $f$ takes place \underline{while} process $g$ takes place'} \\  
  \textrm{$f \circ g$\,}
  & :=
  \textrm{`process $f$ takes place \underline{after} process $g$ takes place'}  
\end{align*}

The  \em parallel composition \em operation, written `$\otimes$', consists of placing a pair of diagrams side-by-side: 
\[
\left( \ \input{./figures/parallel-composition1.tikz} \right) \ \otimes \ \left( \ \input{./figures/parallel-composition2.tikz} \right) \ \ := \ \ \input{./figures/parallel-composition3.tikz}   
\] 
Any two diagrams can be composed in this manner, since placing diagrams side-by-side does not involve connecting anything.  This reflects the fact that both processes are happening independently of each other.   

This composition operation is associative: 
\beq\label{tensorassoc}
\left( \boxmap{f} \otimes  \boxmap{g} \right) \otimes  \boxmap{h}\ =\   \boxmap{f} \  \boxmap{g}  \ \boxmap{h}\ =\ \boxmap{f} \otimes \left( \boxmap{g} \otimes \boxmap{h} \right)
\eeq
and it has a unit, the empty diagram:
\beq\label{tensorunit}
\input{./figures/empty-diagram-unit.tikz}
\eeq

Parallel composition  is defined for system-types as well. That is, for types $A$ and $B$, we can form a new type $A \otimes B$, called the \textit{joint system-type}:
\ctikzfig{parallel-compose-wires}
There is also a special `empty' system-type, symbolically denoted $I$, which is used to represent `no inputs', `no ouputs' or both.

The \em sequential composition \em operation, written `$\circ$', consists of connecting the outputs of one diagram to the inputs of another diagram:
\[
\left(\ \input{./figures/sequential-composition1.tikz}\right)\ \circ \ \left(\ \input{./figures/sequential-composition2.tikz}\right)\ \ = \ \ \input{./figures/sequential-composition3.tikz}
\]
In other words, the process on the right  happens first,  followed by the process on the left,  taking the output of the first process as its input. Clearly not any pair of diagrams can be composed in this manner: the number and type of the inputs of the left process must match the number and type of the outputs of the right process.

The sequential composition operation is also associative: 
\beq\label{compassoc}
\left( \boxmap{h} \circ  \boxmap{g} \right) \circ  \boxmap{f}\   =     \ 
\input{./figures/threechain.tikz} \ 
=  \ \boxmap{h} \circ \left( \boxmap{g} \circ \boxmap{f} \right)
\eeq
and it also has a unit.  This time, it's a plain wire of appropriate type:
\beq\label{compunit}
\input{./figures/identity-diagram-unit.tikz}
\eeq

These two composition operations, due to their diagrammatic origin, also obey an \em interchange law\em.  Since we have:
\[
\left(\boxmap{g_1}\otimes \boxmap{g_2}\right)\circ\left(\boxmap{f_1}\otimes \boxmap{f_2}\right)
=
\left(\boxmap{g_1}\ \boxmap{g_2}\right)\circ\left(\boxmap{f_1}\ \boxmap{f_2}\right)
=\ 
\raisebox{0.6mm}{\input{./figures/twochain1.tikz}\  \input{./figures/twochain2.tikz}}
\]
\[
\left(\boxmap{g_1}\circ \boxmap{f_1}\right)\otimes\left(\boxmap{g_2}\circ \boxmap{f_2}\right)
=
\left(\,\raisebox{0.6mm}{\input{./figures/twochain1.tikz}}\right)  \otimes \left(\,\raisebox{0.6mm}{\input{./figures/twochain2.tikz}}\right)
\ =\ 
\raisebox{0.6mm}{\input{./figures/twochain1.tikz}\  \input{./figures/twochain2.tikz}}
\]
it follows that:
\beq\label{eq:bifunct}
\left(\boxmap{g_1}\otimes \boxmap{g_2}\right)\circ\left(\boxmap{f_1}\otimes \boxmap{f_2}\right)
=
\left(\boxmap{g_1}\circ \boxmap{f_1}\right)\otimes\left(\boxmap{g_2}\circ \boxmap{f_2}\right)
\eeq

Note that $\circ$ assumes there is some ordering on the input/output wires, and plugs them together `in order'. We can then express different orders by means of two wires crossing over each other:
\ctikzfig{specialbox2bisbis}
which as called a \textit{swap}.

\begin{definition} 
A diagram is a  \em circuit \em  if it  can be constructed  by composing boxes, including identities and swaps, by means of $\otimes$ and $\circ$.
\end{definition}

Every diagram that we have seen so far in this chapter is in fact a circuit.    Here is an example of the assembly of such a circuit:
\ctikzfig{circuitassembly} 
But not all diagrams are circuits. To understand  which ones are, we provide an equivalent characterisation that doesn't refer to the manner that one can build these diagrams, but in terms of a property that has to be satisfied.  

\begin{definition}
  A \textit{directed path} of wires is a list of wires $(w_1, w_2, \ldots, w_n)$ in a diagram such for all $i < n$, the wire $w_i$ is an input to some box for which 
  the wire $w_{i+1}$ is an output.   A \textit{directed cycle} is a directed path that starts and ends at the same box. 
\end{definition}

An example of a directed path is shown in bold here:
\ctikzfig{directed-path}
and an an example of a directed cycle is:
\ctikzfig{directed-cycle}
While diagrams only allow inputs to be connected to outputs, we indeed have done nothing (so far) to rule out the existence of directed cycles. This is precisely what the restricting to `circuit diagrams' does for us:  

\begin{theorem}\label{thm:circuit-acyclic}
The following are equivalent:
\bit
\item a diagram is a circuit, and, 
\item it contains no directed cycles.
\eit
\end{theorem}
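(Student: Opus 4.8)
The plan is to establish the two implications separately, using an auxiliary notion of a \emph{height function} for the forward direction and a topological ordering of the boxes for the converse.

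For the implication that every circuit is acyclic, I would proceed by induction on the way a circuit is assembled from boxes, identities and swaps via $\otimes$ and $\circ$. The cleanest invariant to carry through the induction is the existence of a \emph{height function}: an assignment $t$ of a natural number to each box such that whenever a wire runs from an output of a box $a$ to an input of a box $b$, one has $t(a) < t(b)$. A single box admits such a function (send it to $0$), and identities and swaps contain no boxes, so the claim is trivial for them. For a parallel composite $D_1 \otimes D_2$, no wire connects a box of $D_1$ to a box of $D_2$, so the disjoint union of height functions for $D_1$ and $D_2$ works. For a sequential composite $D_1 \circ D_2$ (with $D_2$ happening first), I would shift heights: put $M := 1 + \max t_2$ and define $t := t_2$ on the boxes of $D_2$ and $t := t_1 + M$ on the boxes of $D_1$. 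The only new wires are those joining outputs of $D_2$ to inputs of $D_1$, and these run from a box of height $< M$ to a box of height $\geq M$, so $t$ is again a height function. Once a height function exists, no directed cycle can exist, since heights would have to strictly increase all the way around a cycle and yet return to their starting value.

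For the converse, suppose a diagram $D$ contains no directed cycle. Viewing the boxes of $D$ as the vertices of a directed graph, with an edge $a \to b$ whenever some wire is an output of $a$ and an input of $b$, acyclicity says this graph is a finite DAG, so its boxes admit a topological ordering $b_1, \ldots, b_n$ in which every edge points from a lower to a higher index. I would then assemble $D$ as a circuit by processing the boxes in this order while maintaining a \emph{frontier}: the list of wires currently available as dangling outputs. The frontier is initialised to the list of global input wires of $D$. At the step that places $b_{k+1}$, every input wire of $b_{k+1}$ is already in the frontier --- it is either a global input or an output of some earlier box $b_j$ with $j \leq k$, which has already been placed --- so I can use swaps to permute the frontier until the inputs of $b_{k+1}$ appear consecutively and in order, form the box $b_{k+1} \otimes \mathrm{id}$ on the remaining wires, and $\circ$-compose it onto the circuit built so far; this consumes those inputs and appends the outputs of $b_{k+1}$ to the frontier. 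After all $n$ boxes are placed, any wire that was both an output and an input of a box has been consumed, so the frontier consists exactly of the global output wires, which a final layer of swaps arranges into the correct order; the resulting expression in $\otimes$, $\circ$, identities and swaps is by definition a circuit equal to $D$.

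I expect the converse to be the main obstacle, and within it the genuine work is the wire bookkeeping: making precise the frontier invariant and checking that a permutation realised by swaps, together with $b_{k+1} \otimes \mathrm{id}$, really reproduces the intended connectivity of $D$ rather than merely a diagram with the same boxes. The forward direction, by contrast, is essentially free once the height function is introduced. A minor point to address in both directions is the treatment of wires that are simultaneously global inputs and global outputs (bare wires with no box), which are handled by leaving them untouched in the frontier throughout.
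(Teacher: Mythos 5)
Your proof is correct. The paper itself states Theorem~\ref{thm:circuit-acyclic} without proof (it is the Joyal--Street result on progressive diagrams, cited in the ``Reference and further reading'' subsection), and your argument is exactly the standard one that reference supplies: a height function propagated through the inductive construction for the forward direction, and a topological ordering of the boxes with a frontier of dangling wires, permuted by swaps, for the converse. You correctly identify the only delicate point --- verifying that the frontier bookkeeping reproduces the connectivity of the original diagram rather than just its multiset of boxes --- and your invariant (each wire enters the frontier exactly once, at initialisation or when its source box is placed, and is consumed exactly once, when its target box is placed or at the final output permutation) is sufficient to discharge it.
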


\subsection{Category-theoretic counterpart}

We are now ready to provide a category-theoretic counterpart to diagrams.  A  \em symmetric monoidal category \em is essentially a process theory where all diagrams are circuit diagrams. Systems and processes are renamed, respectively as \em objects \em  and \em morphisms\em, and rather than diagrams being the main actor, the composition operations $\circ$ and $\otimes$ are taken as primitive. In order to guarantee that these operations behave as they did with diagrams, we have to require extra equations, namely equations (\ref{tensorassoc})--(\ref{eq:bifunct}) above.

\begin{definition}\label{def:strict-monoidal}
  A (strict) \textit{monoidal category} $\mathcal C$ consists of: 
  \begin{itemize}
    \item a collection $\textrm{ob}(\mathcal C)$ of \textit{objects},
    \item for every pair of objects $A, B$, a set $\mathcal C(A,B)$ of \textit{morphisms},
    \item for every object $A$, a special \em identity morphism \em $1_A \in \mathcal C(A,A)$,
    \item a \em sequential composition \em operation for morphisms:
    \[ (- \circ -) : \mathcal C(B,C) \times \mathcal C(A,B) \to \mathcal C(A,C) \]
    \item a \em parallel composition \em operation for objects: 
\[  
(- \otimes -) : \textrm{ob}(\mathcal C) \times \textrm{ob}(\mathcal C) \to \textrm{ob}(\mathcal C)
\]
    \item a \em unit object \em $I \in \textrm{ob}(\mathcal C)$, and
    \item a \em parallel composition \em operation for morphisms:
    \[ 
    (- \otimes -) : \mathcal C(A,B) \times \mathcal C(C,D) \to \mathcal C(A\otimes C,B \otimes D) 
    \]  
  \end{itemize}
  such that:
  \begin{itemize}
    \item $\otimes$ is associative and unital on objects:
    \[ (A \otimes B) \otimes C = A \otimes (B \otimes C) \qquad A \otimes I = A = I \otimes A \]
    \item $\otimes$ is associative and unital on morphisms:
    \[ \ \  (f \otimes g) \otimes h = f \otimes (g \otimes h) \ \   \qquad f \otimes 1_I = f = 1_I \otimes f \]
    \item $\circ$ is associative and unital on morphisms:
    \[ \ \ \ \   (h \circ g) \circ f = h \circ (g \circ f) \ \ \ \   \qquad 1_B \circ f = f = f \circ 1_A \]
    \item $\otimes$ and $\circ$ satisfy the \textit{interchange law}:  
    \[ 
    (g_1\otimes g_2)\circ(f_1\otimes f_2) = (g_1\circ f_1)\otimes(g_2\circ f_2)
    \]
  \end{itemize}
\end{definition}

Note we often write $f : A \to B$ as shorthand for $f \in \mathcal C(A, B)$, which indicates that we are thinking of morphisms as maps of some kind.

\begin{definition}
  A \textit{symmetric monoidal category} (SMC) is a monoidal category with a swap morphism:
  \[
  \sigma_{A,B} : A\otimes B \to B \otimes A
  \]
  defined for all objects $A, B$, satisfying:  
  \begin{itemize}
    \item $\sigma_{B,A} \circ \sigma_{A,B} = 1_{A\otimes B}$ 
    \item $(f \otimes g) \circ \sigma_{A,B} = \sigma_{B',A'} \circ (g \otimes f)$
    \item $\sigma_{A,I} = 1_A$
    \item $(1 \otimes \sigma_{A,C}) \circ (\sigma_{A,B} \otimes 1_C) = \sigma_{A,B\otimes C}$ 
  \end{itemize}
\end{definition} 

The diagrammatic counterparts to the first two of these equations are:
\ctikzfig{circuit_diagram_equiv_paper}
which are simply diagram deformations. The remaining two equations are again tautologies in terms of diagrams:
\ctikzfig{circuit_diagram_equiv_paper2}
  
It goes without saying that the category-theoretic definition is much more involved than its diagrammatic counterpart.  And in fact, it gets worse when we drop the `strict' from it, as explained in \cite{CatsII} \S  3.4.4. Simply stated, dropping strictness allows for some of the defining equations not to hold on-the-nose,  but are allowed some `wiggle room'. However, making this precise requires a number of \em natural isomorphisms\em, which express `how' one  `wiggles' from the LHS to the RHS, together with a bunch of \em coherence conditions \em which force all the natural isomorphisms to fit together well. The reason why it does make sense (and is in fact necessary) to consider this very involved definition is that  pretty much all set-theoretic structures which organise themselves into monoidal categories are in fact of the non-strict variety.  However, there is a standard procedure for turning every non-strict monoidal category into an equivalent strict one, so considering only strict monoidal categories (as we shall do throughout these papers) yields no real loss of generality.

\begin{example}
The category FHilb whose objects are finite-dimensional Hilbert spaces and whose morphisms are linear maps forms a (non-strict) SMC. Sequential composition is just the usual composition of linear maps, parallel composition is tensor product, and the unit object is the 1-dimensional Hilbert space $\mathbb C$.   The category Hilb of all Hilbert spaces and bounded linear maps also forms an SMC. However,  it does not form a compact closed category, which we'll define in the next section, so in categorical quantum mechanics, we typically focus on FHilb.
\end{example}

The precise connection between SMCs and circuit diagrams is the following:

\begin{theorem}\label{thm:circuit-smc}
  Any circuit diagram can be interpreted as a morphism in an SMC, and two morphisms are equal according to the axioms of an SMC if and only if their circuit diagrams are equal.
\end{theorem}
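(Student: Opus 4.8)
The plan is to identify circuit diagrams, taken up to deformation, with the morphisms of the \emph{free} symmetric monoidal category on the signature given by the boxes and types, and then read the theorem off from the universal property of that free construction. Concretely, I would introduce two categories. The \emph{syntactic} SMC $\mathcal{F}$ has as objects finite lists of types from $T$ and as morphisms the formal expressions obtained by combining boxes, identities and swaps with $\otimes$ and $\circ$, quotiented by the smallest congruence forcing the SMC axioms (associativity, unitality, interchange, and the four symmetry equations) to hold; by construction this is the initial SMC equipped with an interpretation of the signature, so any assignment of the boxes to morphisms of an arbitrary SMC extends uniquely to a strict monoidal functor out of $\mathcal{F}$. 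The \emph{diagrammatic} category $\mathcal{D}$ has the same objects, but its morphisms are circuit diagrams modulo the deformation-equivalence described in the text, with $\circ$ given by plugging and $\otimes$ by side-by-side placement; one first checks that $\mathcal{D}$ is itself an SMC.

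The first clause — that every circuit can be interpreted as a morphism — is handled by structural recursion on the way a circuit is assembled from boxes, identities and swaps via $\otimes$ and $\circ$, sending each constructor to the corresponding operation in the target SMC. The only thing to verify is that the result is independent of the chosen assembly, and this is exactly the \emph{soundness} direction ``morphisms equal by the SMC axioms $\Rightarrow$ diagrams equal'': it amounts to observing, one axiom at a time, that each defining equation of an SMC becomes a valid diagram equation. For associativity, unitality and the first two symmetry equations this is literally a deformation, and for the interchange law and the remaining symmetry equations it is a tautology once the diagram is drawn — precisely the pictures already displayed in equations (\ref{tensorassoc})--(\ref{eq:bifunct}) and in the two figures following the definition of an SMC. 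This yields a well-defined identity-on-objects strict monoidal functor $\llbracket-\rrbracket\colon \mathcal{F}\to\mathcal{D}$ sending each generating box to its diagram.

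For the converse \emph{completeness} direction — ``diagrams equal $\Rightarrow$ morphisms equal by the SMC axioms'', i.e.\ faithfulness of $\llbracket-\rrbracket$ — I would argue by normal forms. Using acyclicity (Theorem~\ref{thm:circuit-acyclic}), every circuit admits a consistent top-to-bottom ordering of its boxes, hence can be deformed into a \emph{sliced} form in which each horizontal slice contains exactly one box or swap tensored with identity wires; reading the slices from bottom to top produces a canonical term of $\mathcal{F}$ denoting the diagram (this also gives fullness of $\llbracket-\rrbracket$). The crux is then to show that any two terms denoting the same deformation class reduce, via the SMC axioms alone, to a common normal form: different slicings of one diagram differ only by (i) sliding boxes up or down past identity wires, governed by the unit laws and functoriality of $\otimes$, and (ii) exchanging the heights of two boxes lying in incomparable positions, governed by the interchange law together with the naturality of the swap. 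Together with the soundness functor this identifies $\mathcal{F}$ with $\mathcal{D}$.

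The main obstacle is making step (ii) rigorous: one must prove that \emph{every} admissible deformation between diagrams is generated by this finite list of local moves, so that no topological equivalence escapes the SMC axioms. This is a generic-position (Morse-theoretic) argument on the height function of a diagram: as one continuously deforms one sliced diagram into another, the combinatorial type of the slicing changes only at finitely many critical moments, and each such moment is exactly one of the local moves above — a height-swap of two vertices (interchange, or naturality of the symmetry), or the creation/cancellation of an identity layer (a unit law); because circuits are acyclic, there are no cup/cap births to contend with. Packaging this carefully is precisely the content of the Joyal--Street coherence theorem for symmetric monoidal categories, and I would either invoke it directly or reproduce its foliation argument in the restricted, acyclic setting of circuits, which is all that is needed here.
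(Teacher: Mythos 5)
Your proposal is correct and follows essentially the same route as the paper, which states this result without proof and defers it to Joyal and Street's theorem on progressive diagrams: the free SMC on the signature, soundness by checking each axiom is a deformation or tautology, and completeness via slicing plus the generic-position argument that all deformations are generated by interchange, naturality of the swap, and the unit laws. You correctly identify the only genuinely hard step (that the local moves exhaust all topological deformations) and correctly attribute it to the Joyal--Street coherence/foliation argument, which is exactly what the paper's citation to \cite{JS} is doing.
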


We won't say that much about the categories that correspond to arbitrary diagrams,  since they will be superseded by the notions introduced in the next section. The main point is that one has to adjoin an additional operation, called \em trace\em, for every triple $A, B, C$ of objects:
\[     
{\rm tr}_{B,C}^A : \mathcal C(A\otimes B, A\otimes C)  \to \mathcal C(B,C)
\]
This trace obeys a bunch of axioms that guarantee the following counterpart diagrammatic behaviour:
\[
{\rm tr}_{B,C}^A ::\ \ \raisebox{-0.6mm}{\input{./figures/partialtracepre_paper.tikz}}\ \ \mapsto \ \ \input{./figures/partialtrace_paper.tikz}  
\]
While such a trace  does play a role in quantum theory,  we will soon see that it  arises as a derived concept in a simpler type of category.

\subsection{Reference and further reading}  

The use of diagrams started with Penrose's diagrammatic calculus for abstract tensor systems \cite{Penrose}. The proof that abstract tensor systems characterise the free traced symmetric monoidal category was given in~\cite{KissingerATS}.

Monoidal categories are due to Benabou \cite{Benabou}, with their modern formulation---and the fact that any monoidal category is equivalent to a strict monoidal category---being worked out by Mac Lane \cite{MacLaneCoherence}.  The connection between circuit diagrams and symmetric monoidal categories was established  by Joyal and Street in \cite{JS}, where they are referred to as `progressive diagrams'.
  

\section{String diagrams}  

\begin{quote}
\em When two systems, of which we know the states by their respective representatives, enter into temporary physical interaction due to known forces between them, and when after a time of mutual influence the systems separate again, then they can no longer be described in the same way as before, viz.~by endowing each of them with a representative of its own. I would not call that \textit{one} but rather \textit{the} characteristic trait of quantum mechanics, the one that enforces its entire departure from classical lines of thought.\par \em \hfill    --- Erwin Schr\"odinger,  1935.
\end{quote}

\noindent
By 1935, Schr\"odinger had already realised that the biggest gulf between quantum theory and our classical ways of thinking was really that, when it comes to quantum systems, the whole is more than the sum of its parts. In the classical world, for instance, it is possible to totally describe the state of two systems---say...two objects sitting on a table---by first totally describing the state of the first object then totally describing the state of the second object. This is a fundamental property one expects of a classical, or \textit{separable} (or, in category-theory lingo: \em Cartesian\em) universe. However, as Schr\"odinger points out, there exist states predicted by quantum theory (and observed in the lab!) which do not obey this `obvious' law about the physical world.  Schr\"odinger called this new, totally non-classical phenomenon \em verschr\"ankung\em, which later became translated to the dominant scientific language as \em entanglement\em. 

In contrast to this great insight, it took physicists many years to actually exploit this  property and reveal the features of the quantum world that are direct consequences of it.  Most strikingly, it took physicists some 60 years to discover what is probably the most direct consequence: \em quantum teleportation\em. 

\subsection{Non-separability}

\begin{definition}\label{def:stringdiagram} 
\em String diagrams \em  can be defined equivalently as: 
\bit
\item diagrams consisting of boxes and wires, where we additionally allow inputs to be connected to inputs and outputs  to be connected to outputs, for example:   
\[
\input{./figures/compound-process-capscups.tikz} 
\]
\eit
\bit
\item circuit diagrams that contain a special state and a special effect:
\[
\raisebox{0.2mm}{\input{./figures/CupStateType.tikz}}\! \qquad\quad {\rm and } \qquad\quad \raisebox{-0.2mm}{\input{./figures/CapEffectType.tikz}}
\]
for each type for which we have:
\begin{equation}\label{eq:cup-equations}
  \raisebox{2mm}{\input{./figures/CupCapNew.tikz}}\qquad\quad\qquad
\begin{array}{c}
\input{./figures/cap-box-swap.tikz} \, \ = \, \ \input{./figures/CapEffectType.tikz}\vspace{4mm}\\
\input{./figures/cup-box-swap.tikz}  \, \ = \, \ \input{./figures/CupStateType.tikz}
\end{array}
\end{equation}
\eit  
\end{definition}  

We can relate these two equivalent definitions by writing the cup state and cap effect from the second definition as cup- and cap-shaped pieces of wire:
\[
\raisebox{-1mm}{\begin{tikzpicture}
	\begin{pgfonlayer}{nodelayer}
		\node [style=none] (0) at (-0.75, 0.5) {};
		\node [style=none] (1) at (0.75, 0.5) {};
		\node [style=none] (2) at (0.75, 0.5) {};
		\node [style=none] (3) at (-0.75, 0.5) {};
	\end{pgfonlayer}
	\begin{pgfonlayer}{edgelayer}
		\draw [in=-90, out=-90, looseness=1.75] (0.center) to (1.center);
	\end{pgfonlayer}
\end{tikzpicture}} \ \ := \ \ \input{./figures/CupStateType.tikz} \qquad\qquad\raisebox{1mm} {\begin{tikzpicture}
	\begin{pgfonlayer}{nodelayer}
		\node [style=none] (0) at (-0.75, -0.5) {};
		\node [style=none] (1) at (0.75, -0.5) {};
		\node [style=none] (2) at (-0.75, -0.5) {}; 
		\node [style=none] (4) at (0.75, -0.5) {};
	\end{pgfonlayer}
	\begin{pgfonlayer}{edgelayer}
		\draw [in=90, out=90, looseness=1.75] (0.center) to (1.center);
	\end{pgfonlayer}
\end{tikzpicture}}\ \  :=\ \  \input{./figures/CapEffectType.tikz}  
\]
so that the equations in~\eqref{eq:cup-equations} become:
\begin{equation}\label{eq:wire-yank-all}
  \input{./figures/wire_yank_all.tikz}
\end{equation}
Hence they are reduced to simple wire deformations:  
\begin{center}
\includegraphics[height=3.5cm]{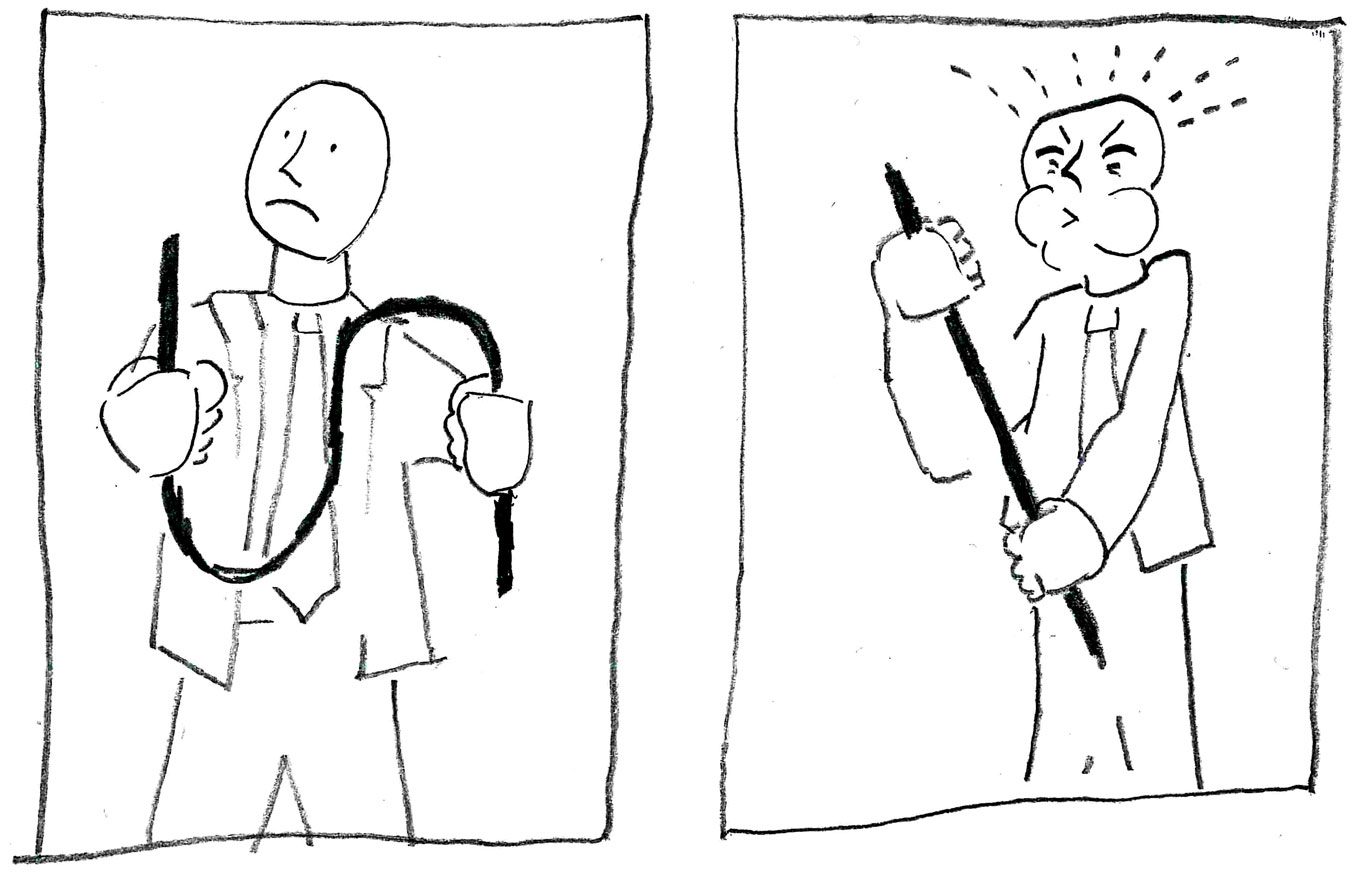}\qquad\  \includegraphics[height=3.5cm]{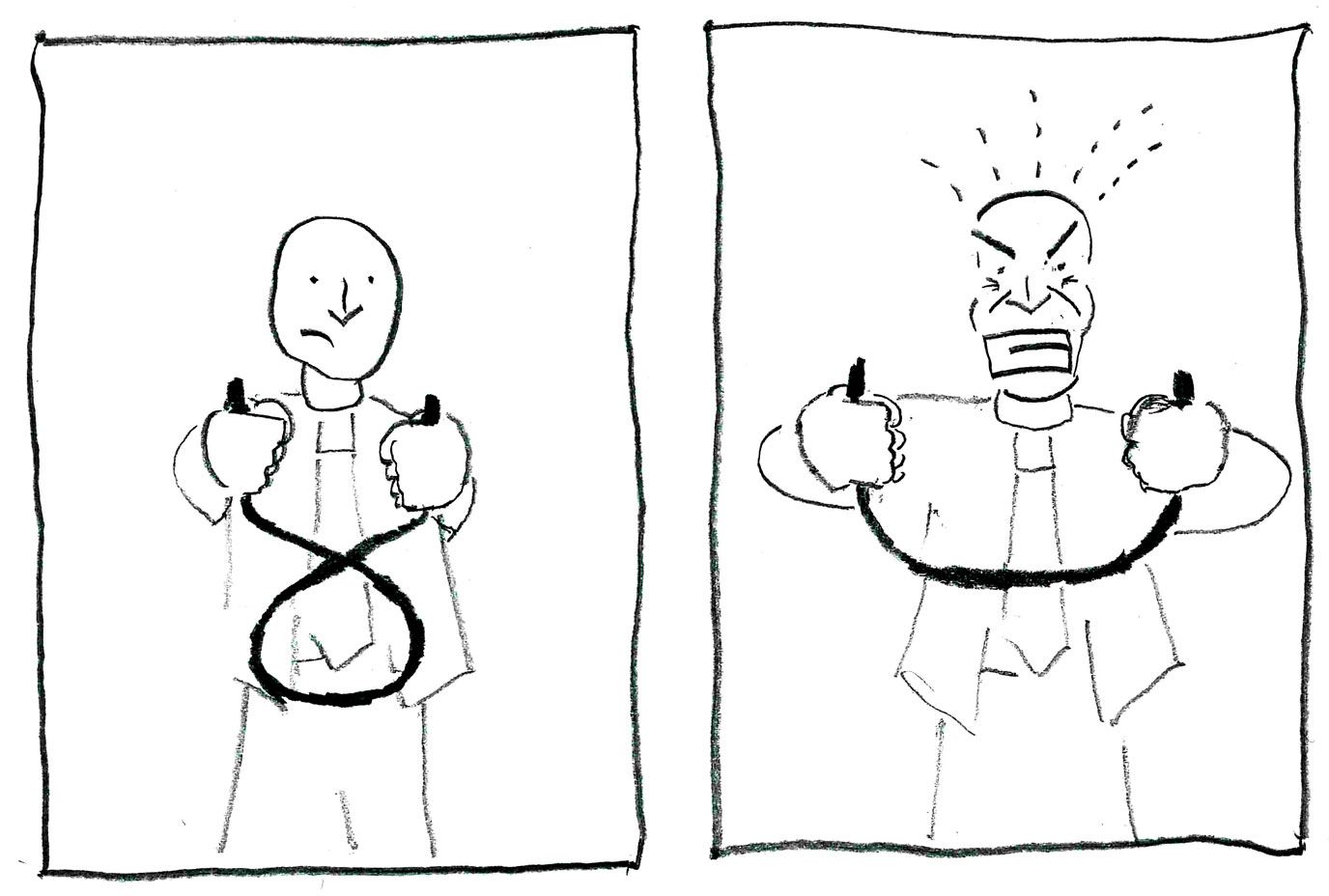}
\end{center}

String diagrams capture what Schr\"odinger described as the characteristic trait of quantum mechanics: the existence of non-separable states. Visualising cups and caps as wires captures this intuition that the two systems involved cannot be separated from each other (since, of course, we are not allowed to cut wires). We can state this more formally as follows:

\begin{proposition}\label{prop:cupdiscrubbish}
If a theory is described by string diagrams, and all two-system states $\psi$ are $\otimes$-separable,  i.e.~there exist states $\psi_1$ and $\psi_2$ such that:    
\ctikzfig{State2split}
then every process $f$ is $\circ$-separable, i.e.~it can be written as:   
\[
\boxmap{f}\ \  = \ \ 
\begin{array}{c}
\raisebox{1.9mm}{\pointmap{\,\psi}}\vspace{0.5mm}\\
\raisebox{-1.9mm}{\copointmap{\,\pi\,}}
\end{array}
\]
for some effect $\pi$ and some state $\psi$.
\end{proposition}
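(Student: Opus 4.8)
The plan is to exploit the \emph{process--state duality} provided by the cups and caps of Definition~\ref{def:stringdiagram}: in a theory described by string diagrams, bending the input wire of a process turns that process into a state, and this operation is invertible by bending the wire back, courtesy of the yanking equations~\eqref{eq:wire-yank-all}. The strategy is to push $f$ through this duality, apply the separability hypothesis to the resulting bipartite state, and then bend the wire back to read off the $\circ$-separable form.

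First I would reduce to a single-input, single-output process $f : A \to B$, taking $A$ to be the parallel composite of all input types and $B$ the composite of all output types; cups and caps for these composite types are obtained by tensoring the component cups and caps. Then I would form the bipartite state
\[
\widehat{f} \ := \ (1_A \otimes f)\circ \cup_A \ : \ I \to A \otimes B,
\]
obtained by attaching $f$ to one leg of the cup on $A$. Since the two systems involved are exactly $A$ and $B$, the hypothesis applies, so there exist states $\psi_1 : I \to A$ and $\psi_2 : I \to B$ with $\widehat{f} = \psi_1 \otimes \psi_2$.

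Next I would recover $f$ from $\widehat{f}$ with a cap. The yanking equation~\eqref{eq:wire-yank-all} gives the identity
\[
f \ = \ (\cap_A \otimes 1_B)\circ(1_A \otimes \widehat{f}),
\]
and substituting the separable form, then applying the interchange law, yields
\[
f \ = \ (\cap_A \otimes 1_B)\circ\bigl(1_A \otimes \psi_1 \otimes \psi_2\bigr) \ = \ \bigl(\cap_A \circ (1_A \otimes \psi_1)\bigr)\otimes \psi_2 .
\]
Setting $\pi := \cap_A \circ (1_A \otimes \psi_1) : A \to I$, which is an effect, and observing that tensoring an effect $A \to I$ with a state $I \to B$ coincides with composing them sequentially through the unit $I$, this is precisely $f = \psi_2 \circ \pi$: the state $\psi_2$ stacked on top of the effect $\pi$, as claimed.

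I expect the main obstacle to be bookkeeping rather than anything deep. One must set up the bending so that the yanking identity applies on the nose — in particular the cap used to bend back must be the partner of the cup used to bend in — and one must dispatch the degenerate cases where $f$ is already a state (no inputs) or an effect (no outputs); these reduce trivially by taking $\pi$ or $\psi_2$ to be the number $1 = \emptydiag$. The conceptual heart is simply the observation that separability of \emph{all} bipartite states forces the image of every process under the cup--cap isomorphism to factor through the tensor unit $I$, which is exactly $\circ$-separability.
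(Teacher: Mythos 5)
Your proof is correct and takes essentially the same route as the paper's: both apply the $\otimes$-separability hypothesis to a bipartite state built from the cup and then recover the $\circ$-separable form of $f$ by the yanking equations~\eqref{eq:wire-yank-all}. The only cosmetic difference is that the paper separates the cup state itself (once and for all, independently of $f$) and substitutes that into the zigzag identity for an arbitrary $f$, whereas you first bend $f$ into the bipartite state $\widehat f$ and separate that; the resulting computation is identical.
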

\begin{proof}
By assumption, the cup-state  is $\otimes$-separable:      
\[
\begin{tikzpicture}
	\begin{pgfonlayer}{nodelayer}
		\node [style=none] (0) at (-1, 0.5) {};
		\node [style=none] (1) at (1, 0.5) {}; 
	\end{pgfonlayer}
	\begin{pgfonlayer}{edgelayer}
		\draw [in=-90, out=-90, looseness=1.75] (0.center) to (1.center); 
	\end{pgfonlayer}
\end{tikzpicture}\ \ =\ \point{\psi_1} \point{\psi_2}  
\]
So, for any process $f$ we have:
\ctikzfig{Separationproof_paper}
\end{proof}

What is this telling us? First, we should note that all processes being $\circ$-separable is a \textit{bad thing}. It is the same as saying all processes in the theory simply throw away the input and produce a constant output (possibly multiplied by some number depending on the input). In other words, nothing ever happens!  As this is an absurd condition for  any `reasonable' process theory, all bipartite (i.e.~two-system) states cannot be $\otimes$-separable. 

In such reasonable theories, string diagrams guarantee that the collection of two-system states is just as rich as the processes involving the same types:    

\begin{proposition}\label{prop:closurestring}
Any theory described by string diagrams has \em process-state duality\em. That is, for any two types $A$ and $B$ we have a 1-to-1 correspondence between processes and states of the following form:
\[
\input{./figures/notrans.tikz}\ \qquad\longleftrightarrow\qquad\  \input{./figures/paper1.tikz}
\]
which is realised by:
\[
\boxmap{f}\ \ \mapsto\ \  \input{./figures/CupActionpaper.tikz} \qquad\qquad\ \   \input{./figures/State2psi.tikz}\ \ \mapsto\ \ \input{./figures/CapActionpaper.tikz}
\]
\end{proposition}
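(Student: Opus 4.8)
The plan is to exhibit the two assignments in the statement as mutually inverse maps, so that together they constitute the claimed bijection. Both assignments are built entirely out of cups, caps, and the datum at hand, so the entire content of the proposition is that composing them in either order returns the original process, respectively the original state; and this will follow from the wire-yanking identities \eqref{eq:wire-yank-all}, which are themselves just the defining cup--cap equations \eqref{eq:cup-equations} redrawn as deformations.

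First I would fix conventions for the two directions. Given a process $f$, the forward map bends its input leg upward against a cup, turning it into a bipartite state with two output wires. Conversely, given a bipartite state, the backward map caps off the appropriate leg, turning it into a process with one input and one output. Checking well-typedness is routine: the cup and cap are exactly the special state and effect guaranteed for each type by Definition~\ref{def:stringdiagram}, and the type-matching restriction on wirings ensures each composite is a legal diagram.

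Next I would verify the two round trips. For \emph{forward then backward}, I apply the cup to $f$ and then cap off a leg; the cup and cap so introduced lie along a common strand and straighten under a single application of \eqref{eq:wire-yank-all}, leaving $f$ unchanged. For \emph{backward then forward}, I cap the state and then reattach a cup; once more a cup--cap pair appears on one wire and yanks straight, returning the original state. Each round trip invokes one of the two yanking identities in \eqref{eq:wire-yank-all}, so both are genuinely needed, and the one-to-one correspondence then follows since mutually inverse maps are automatically bijective.

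The main obstacle is bookkeeping rather than anything conceptual: one must place the cups and caps on precisely the right wires, in the correct left--right order, so that in each composite exactly one cup--cap pair runs along a single strand and the relevant yanking equation applies verbatim, with no stray swaps left over. Once the diagrams are drawn with matching conventions, the two reductions are immediate wire deformations, and no further computation is required.
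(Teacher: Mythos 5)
Your argument is correct and is exactly the verification the paper leaves implicit: the two assignments displayed in the statement are checked to be mutually inverse by straightening the cup--cap pair via the yanking identities \eqref{eq:wire-yank-all}. Nothing further is needed beyond the bookkeeping you already note about placing the cup and cap on the correct wires.
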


\subsection{Traces and transposes}    

For a  process $f$ with one of its inputs having the same type as one of its outputs, the \em partial trace \em with respect to that input/output pair is: 
\[
{\rm tr}_{B,C}^A \left(\input{./figures/partialtracepre_paper.tikz}\right)\ \, := \  \input{./figures/partialtrace_paper.tikz}
\]
The total trace, or simply the \textit{trace}, is the special case where $B = C = I$. A typical property of the trace is \textit{cyclicity},  which now comes just from  the observation that---since only connectivity matters---the following two diagrams are the same: 
\ctikzfig{trace_f_g_paper}

The \em transpose \em of a  process $f$ is the  process:  
\[
\input{./figures/yestrans.tikz} 
\]
or equivalently, by \eqref{eq:wire-yank-all}:
\[
\input{./figures/yestrans-flip.tikz} 
\]
Then, clearly if we transpose twice, we get back where we started: 
\[  
\input{./figures/transtransmapyes.tikz}\ \  =\ \ \boxmap{f} 
\]
In other words, transposition is an \textit{involution}.  And now comes the really cool bit. First, we deform our boxes a bit:
\[ 
\boxmap{f} \ \ \ \leadsto\ \ \ \map{f} 
\]
Now, if we express the transpose of $f$ as a box labelled `$f$', but rotated 180${}^\circ$:    
\[
\begin{tikzpicture}
	\begin{pgfonlayer}{nodelayer}
		\node [style=maptrans] (0) at (0, 0) {$f$};
		\node [style=none] (1) at (0, -1.25) {};
		\node [style=none] (2) at (0, 1.25) {};
	\end{pgfonlayer}
	\begin{pgfonlayer}{edgelayer}
		\draw [style=swap] (0) to (1.center); 
		\draw [style=swap] (2.center) to (0);
	\end{pgfonlayer}
\end{tikzpicture}\ \  :=\ \ \input{./figures/transmapyes.tikz}
\]
the definition of a transpose 
gets built-in to this `180${}^\circ$ rotation'-notation: 
\begin{center}
\includegraphics[height=4.2cm]{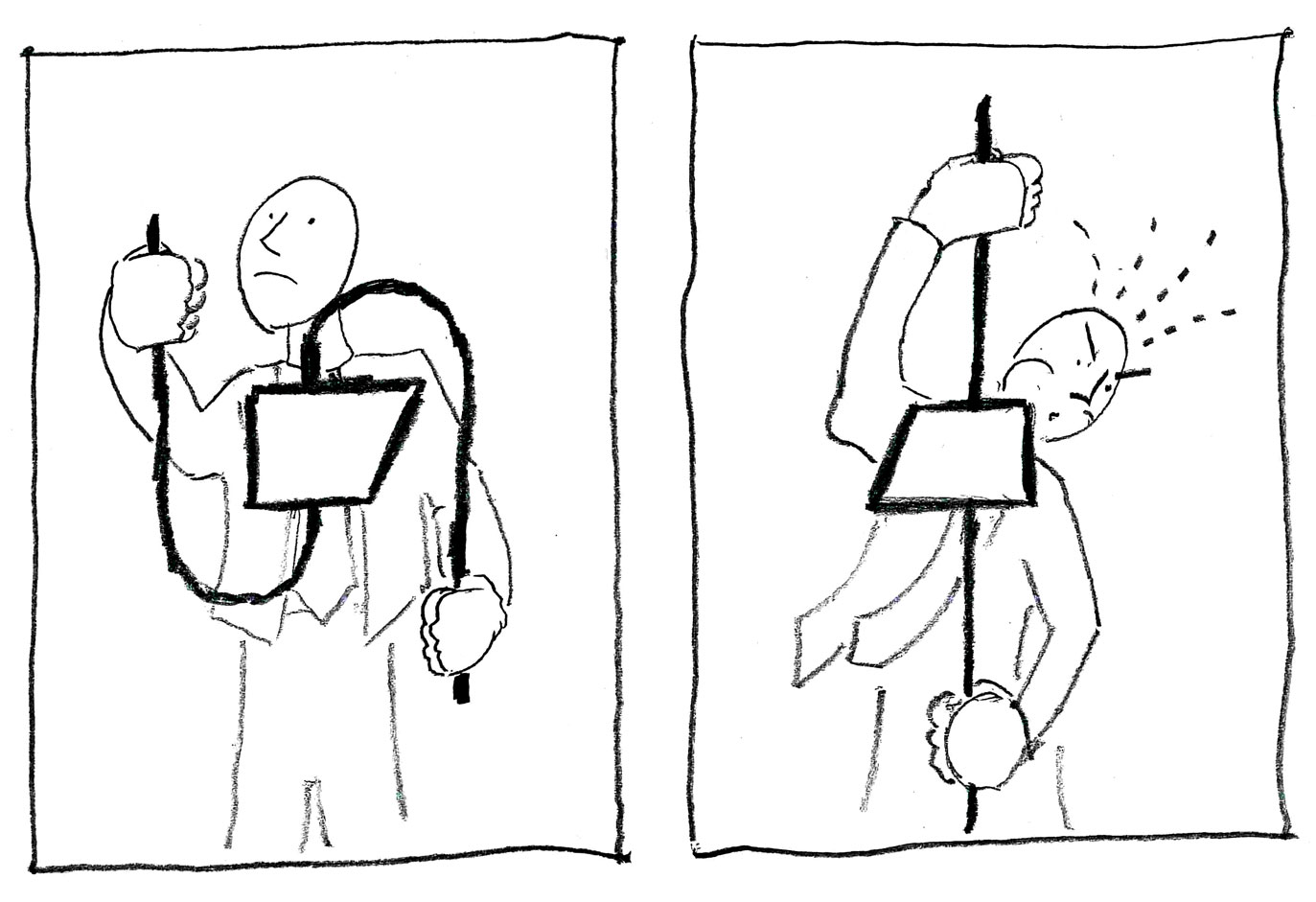}
\end{center}
From this, it also follows that:
\[
\input{./figures/slide1.tikz}\qquad\qquad\input{./figures/slide2.tikz}
\]
So we can slide boxes across cups and caps just like beads on a wire:
\[
\input{./figures/slide1steps.tikz}\qquad\quad \input{./figures/slide1steps2.tikz}
\]

Now, let's play a game. Here is the challenge: Aleks and Bob are far apart, Aleks possesses a system in a state $\psi$, and Bob needs this state.  Suppose they also share another state, namely the cup-state. So, we have this situation:  
\ctikzfig{telesum0NEWpre}
Starting from this arrangement, is there something Aleks and Bob can do in the `?'-marked regions indicated below,
that results in Bob obtaining $\psi$? 
\ctikzfig{telesum0NEW}
Here is a simple solution:
\ctikzfig{telesum00NEW}
If you the reader came up with this solution yourself, then you on your own did in a few seconds what all the physicists in the world failed to do between 1930 and 1990: discover quantum teleportation!  


\subsection{Adjoints and unitarity}\label{sec:adjoints-and-unitarity} 

There is a slight caveat in the above discussion of teleportation.  While the cap is a process that Aleks can  attempt to `do', he'll need a bit of luck to succeed. This is because the cap process is not \textit{causal} (a concept we'll soon meet), so it can only be realised with some probability strictly less than 1.  As a consequence, Aleks might not get the effect he wants (the cap), but some other effect  $\phi_1, \phi_2, \ldots$ which, by process-state duality, we can represent as a cap with  some `unwanted' box:
\[ 
\input{./figures/ps-U-paper.tikz}\  \ :=\ \ \input{./figures/ps-phi-paper.tikz} 
\]
which we'll called the \textit{error}. This box obstructs Bob's direct access to the state he so much desires:
\beq\label{eq:tele-with-error}
\input{./figures/telesum1-2NEWpapera.tikz} \ \ =\ \ \input{./figures/telesum1-2NEWpaperb.tikz}
\eeq
So, all is lost?  Of course not! Bob  just needs to figure out how to `undo'  this error by means of some process.

Before going there, we need to enrich our language of string diagrams a bit.  In addition to rotating boxes 180${}^\circ$, we will now also reflect them vertically:
\[
\input{./figures/smapAB.tikz} \ \ \mapsto \ \ \input{./figures/smapdagAB.tikz}  
\]
and refer to this reflected box as the \em adjoint\em.  Like transposition, this is  an involution, and it can be applied to entire diagrams in the obvious way:
\[
\input{./figures/arbitrary_klein.tikz}
\ \  \mapsto \ \ 
\input{./figures/arbitrary_klein_flip.tikz}
\]

Rotating by 180${}^\circ$ degrees then reflecting vertically is the same as just reflecting horizontally:
\[ 
\map{f} \ \mapsto\ \maptrans{f}\ \mapsto\ \mapconj{f} 
\]
This horizontal reflection, obtained by composing the transpose and the adjoint, is called the \em conjugate\em.  So all together, boxes now come in quartets:
\beq\label{eq:smapALLFOURnew}
\input{./figures/smapALLFOURnew.tikz}
\eeq
Having adjoints around enables us to make the following definition:

\begin{definition}
A process $U$ an \em isometry \em if we have:     
\beq\label{eq:isometry}
\input{./figures/isometry-types.tikz} 
\eeq
and it is \em unitary \em if its adjoint is also an isometry.    
\end{definition}

Each of these names may sound familiar to  some readers, which is of course no coincidence: 

\begin{example}
  The conjugate, transpose, and adjoint of a linear map $f$ is a new linear map obtained by taking the conjugate, transpose, or adjoint (a.k.a.~conjugate-tranpose) of the matrix of $f$, respectively. Hence isometries and unitaries of linear maps are just the standard notions.
\end{example}

We now return to \eqref{eq:tele-with-error}, where Bob seeks to undo the error introduced by Aleks' process. In the case where $U$ is an isometry (or better yet, a unitary), Bob simply needs to apply its adjoint to undo it:
\ctikzfig{telesum1NEW} 

\subsection{Adjoints and connectedness} 

We haven't said much about adjoints, just that they have to preserve diagrams.  In fact, the ontology of the adjoint and corresponding postulates are still the subject of ongoing research.  Below we will make use of one particular additional condition  that one may want  adjoints to satisfy.  This extra condition comes from taking the notion that an adjoint really is a reflection seriously.

Suppose we have a $\circ$-non-separable process.  One could then imagine  that it has some internal structure, say a collection of tubes or machines connecting some inputs to outputs: 
\begin{center}
  \includegraphics[width=2.5cm]{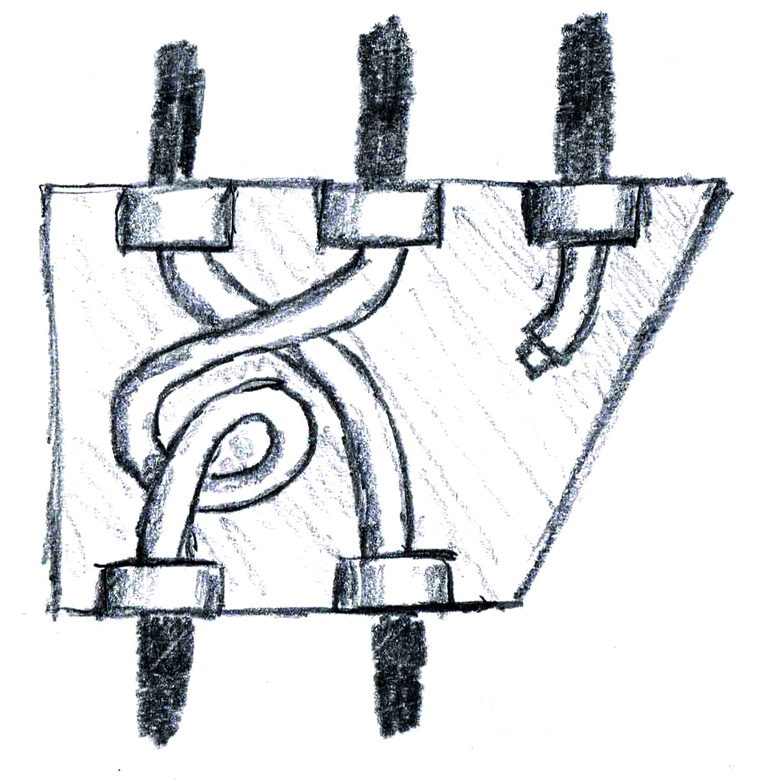}
\end{center}
If we now compose this process with its adjoint, i.e.~its vertical reflection, then these internal  connections match up:
\begin{center}
  \includegraphics[width=2.5cm]{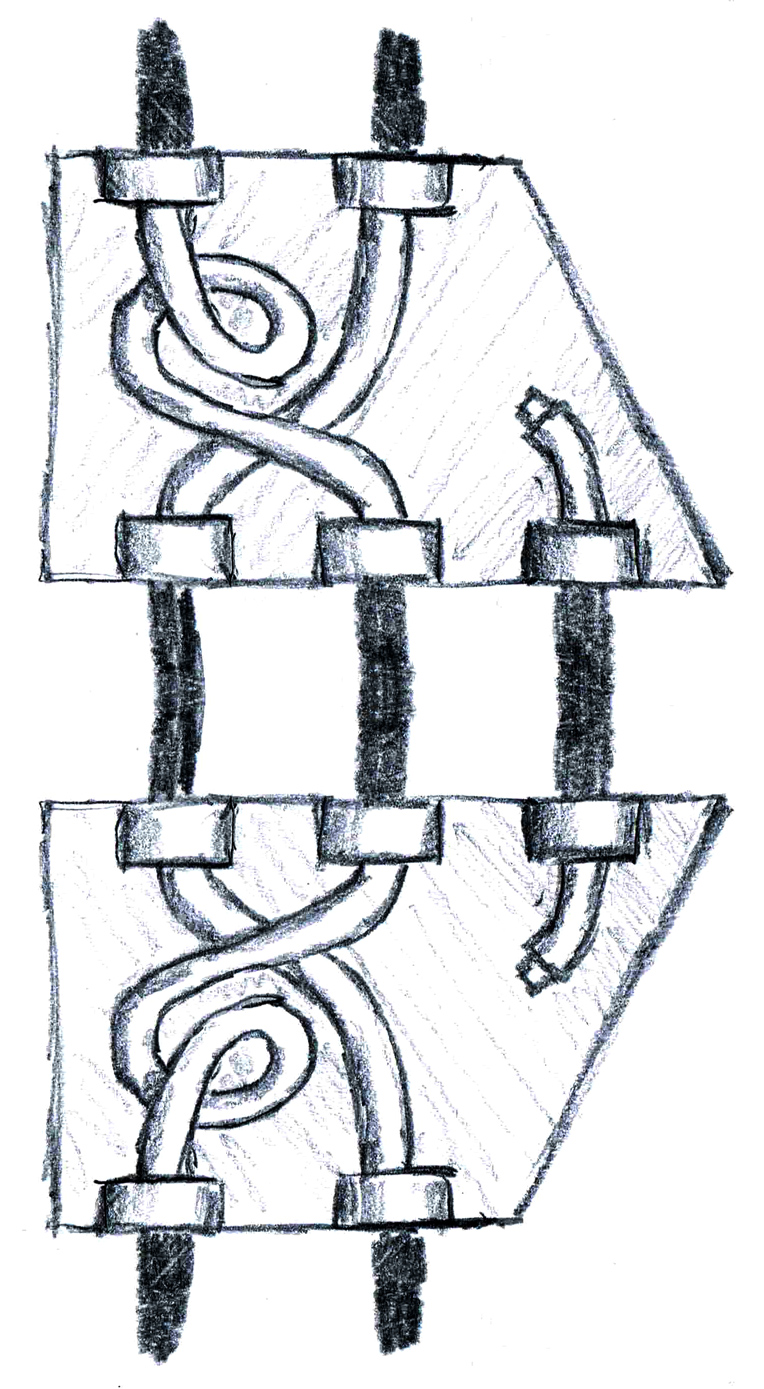}
\end{center}
so one expects the resulting process also to be $\circ$-non-separable, that is:
\beq\label{EQ:daggeraxiom}
\left( \exists \psi, \phi: \ \ \map{f} \ =\ \kpointketbra{\psi}{\phi} \right)
  \ \ \Longleftrightarrow\ \
 \left( \exists \psi', \phi':  \ \   \begin{tikzpicture}
	\begin{pgfonlayer}{nodelayer}
		\node [style=mapdag] (0) at (0, 1) {$f$};
		\node [style=map] (1) at (0, -1) {$f$};
		\node [style=none] (2) at (0, -2.25) {};
		\node [style=none] (3) at (0, 2.25) {};  
	\end{pgfonlayer}
	\begin{pgfonlayer}{edgelayer}
		\draw (1) to (2.center);
		\draw (3.center) to (0); 
		\draw (0) to (1);
	\end{pgfonlayer}
\end{tikzpicture}\ \ =\ \kpointketbra{\psi'}{\phi'}\right)  
\eeq

Indeed this assumption holds for our main example:
\begin{example}
  For linear maps, $\circ$-separable means rank-1. It is a well-known fact from linear algebra that the rank of:
  \ctikzfig{fdagf}
  is the same as the rank of $f$. Hence \eqref{EQ:daggeraxiom} is satisfied.
\end{example}

While this assumption makes sense visually, it can fail in surprising places:
\begin{example}
  Consider a process theory whose processes are relations $R \subseteq A \times B$ where $\circ$ is the usual composition of relations, $\otimes$ is the Cartesian product, and $I$ is the 1-element set $\{ * \}$. The adjoint $R^\dagger$ of a relation $R$ is its converse. That is, $(b,a) \in R^\dagger$ if and only if $(a,b) \in R$. Then the following relation from the two element set $\{0,1\}$ to itself fails to satisfy \eqref{EQ:daggeraxiom}:
  \[ 
  \{ (0,0), (0,1), (1,1) \} \subseteq \{0,1\} \times \{0,1\} 
  \]
\end{example}

\subsection{Category-theoretic counterpart}

In Definition~\ref{def:stringdiagram} we gave two equivalent definitions 
of string diagram, the latter of which consisting of circuit diagrams with special processes called cups and caps. The category-theoretic counterpart for string diagrams proceeds in pretty much the same way:

\begin{definition}\label{def:compact-closed}
  A \textit{compact closed category} (with symmetric self-duality) is a symmetric monoidal category $\mathcal C$ such that for every object $A \in \textrm{ob}(\mathcal C)$, there exists morphisms $\epsilon_A \in \mathcal C(A \otimes A, I)$ and $\eta_A \in \mathcal C(I, A \otimes A)$ such that:  
\[ 
\begin{array}{ccc}
& &  \epsilon_A \circ \sigma_{A,A} = \epsilon_A\\
(\epsilon_A \otimes 1_A) \circ (1_A \otimes \eta_A) = 1_A & \qquad\quad & \\
& & \sigma_{A,A} \circ \eta_A = \eta_A
\end{array}
  \]
\end{definition}

Compact closed categories satisfy an analogue to Theorem~\ref{thm:circuit-smc}, but for string diagrams:

\begin{theorem}\label{thm:string-cmpcc}
  Any string diagram can be interpreted as a morphism in a compact closed category, and two morphisms are equal according to the axioms of a compact closed category if and only if their string diagrams are equal.
\end{theorem}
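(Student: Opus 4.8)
The plan is to follow the same two-step template used for Theorem~\ref{thm:circuit-smc}, exploiting the second characterisation of string diagrams in Definition~\ref{def:stringdiagram}: a string diagram is just a circuit diagram built from the usual boxes together with the distinguished cup-states and cap-effects, subject to the yanking equations \eqref{eq:cup-equations}. Accordingly, I would interpret a string diagram in a compact closed category $\mathcal C$ by sending each generating box to a chosen morphism, each cup of type $A$ to $\eta_A$, each cap of type $A$ to $\epsilon_A$, and each swap to $\sigma$, and then reading off the pattern of $\circ$ and $\otimes$ exactly as for circuits. The point of Definition~\ref{def:compact-closed} is that its three defining equations are precisely the algebraic residue of the three topological moves that distinguish string diagrams from circuits, so the interpretation is forced and the two directions of the theorem reduce to matching moves against axioms.

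For the \emph{soundness} direction (equal diagrams give equal morphisms) I would argue that any deformation of one string diagram into another factors as a sequence of elementary moves of two kinds. The first kind are the circuit-level deformations already covered by Theorem~\ref{thm:circuit-smc}; these are interpreted equally by the monoidal axioms, which also handles the fact that the interpretation does not depend on how the diagram is decomposed into $\circ$ and $\otimes$. The second kind are the genuinely new moves: straightening a bent wire, i.e.\ the snake equation displayed in \eqref{eq:wire-yank-all}, and sliding a cup or cap past a crossing, i.e.\ the symmetry equations in \eqref{eq:cup-equations}. Each corresponds on the nose to one equation of Definition~\ref{def:compact-closed}: the zig-zag identity $(\epsilon_A\otimes 1_A)\circ(1_A\otimes\eta_A)=1_A$ yanks the wire, while $\epsilon_A\circ\sigma_{A,A}=\epsilon_A$ and $\sigma_{A,A}\circ\eta_A=\eta_A$ witness the symmetry of cap and cup (and, together with the single zig-zag, supply the mirror-image zig-zag). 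Hence every diagram deformation is provable from the axioms.

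The \emph{completeness} direction (equal morphisms give equal diagrams) is the hard part, and here I would use process-state duality, Proposition~\ref{prop:closurestring}, to reduce every morphism to a normal form. Bending all inputs up to outputs turns an arbitrary string diagram into a state on the tensor of all its boundary types; this operation is an axiom-provable bijection, so it suffices to classify state diagrams. A state diagram can in turn be pushed, using only the axioms, into a canonical shape in which all cups sit at the bottom, a single layer of boxes (none of them cups or caps) sits in the middle, and a permutation of wires records the connectivity, with any closed loops collected as scalar factors. The claim is that this normal form is a complete combinatorial invariant: it records exactly which boundary and box ports are wired to which, together with the number of closed loops, and nothing else. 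Two axiom-equal morphisms must therefore have the same normal form, hence the same underlying wiring, which is the definition of equal string diagrams.

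The main obstacle lies entirely in this last step: proving that the rewriting to normal form both terminates and is confluent, so that the normal form is genuinely unique, and that the resulting invariant is neither too coarse (it must separate diagrams of different connectivity) nor too fine (diagrams differing only by an allowed deformation must collapse). A clean way to discharge this is to build the category whose morphisms literally are combinatorial string diagrams --- boundary ports plus a perfect matching, modulo nothing but loop-counting --- verify directly that it is compact closed with symmetric self-duality, and observe that the interpretation functor into it is the identity on diagrams, so that faithfulness of this functor is exactly completeness. One delicate point worth flagging is the symmetric self-duality assumption $A^*=A$: it is what forces cups and caps to be symmetric and thereby makes the ambient topology that of bare undirected wires rather than directed ribbons, so the invariant is an ordinary matching. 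The rigorous form of this coherence argument goes back to Kelly and Laplaza, with a diagram-theoretic account in exactly this spirit given by Selinger.
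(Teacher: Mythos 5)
The paper offers no proof of Theorem~\ref{thm:string-cmpcc}: it is stated as the string-diagram analogue of Theorem~\ref{thm:circuit-smc} and implicitly deferred to the literature (Kelly's compact closed categories and the coherence results surveyed by Selinger, both cited in the ``Reference and further reading'' subsection). So there is no in-paper proof to compare against; what you have written is the standard coherence argument, and its architecture --- soundness by matching each elementary deformation against an axiom of Definition~\ref{def:compact-closed}, completeness by exhibiting the free compact closed category concretely as a category of combinatorial diagrams (ports, a matching, a loop count) and reading off faithfulness of the interpretation functor --- is the right one. Your observations that the two symmetry equations together with one zig-zag yield the mirror-image zig-zag, and that symmetric self-duality is exactly what makes the completeness invariant an \emph{undirected} matching, are both correct and worth making explicit.

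Two cautions. First, the normal form you propose for state diagrams (``all cups at the bottom, a single layer of boxes, a permutation'') is not adequate as stated: as soon as two box \emph{outputs} are wired to each other --- which happens in any diagram containing a trace, e.g.\ $\epsilon_A \circ (f \otimes 1_A) \circ \eta_A$ --- a permutation cannot record the connectivity and you need a layer of caps at the top as well; the usual normal form is (names of boxes, together with cups) followed by (caps and identities), with closed loops split off as scalars. Second, setting this up as a terminating and confluent rewriting system is the hard road, and you rightly abandon it mid-paragraph: the clean route is the one you end with, namely to \emph{define} the candidate free category combinatorially, verify the compact closed axioms there directly, and obtain completeness as faithfulness of the evident functor onto it. That is the Kelly--Laplaza argument, and pointing to it is effectively what the paper itself does.
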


\begin{example}\label{ex:hilb-compact}
The category FHilb is compact closed. For a Hilbert space $A$, fix a basis $\{ e_i \}_i$ in $A$ then let:
\[ 
\eta_A : \mathbb C \to A \otimes A
   \qquad\qquad\textrm{and}\qquad\qquad 
   \epsilon_A : A \otimes A \to \mathbb C
\]
be linear maps defined as follows:
\[
   \eta_{A}(1) = \smallsum\ e_i \otimes e_i
   \qquad\qquad
   \epsilon_{A}(e_i \otimes e_j) = \begin{cases}
    1 & \textrm{ if } i = j \\
    0 & \textrm{ otherwise}
   \end{cases}   
   \]
Using Dirac's  `bra-ket' notation,  which is popular in the quantum computing literature, these maps can be written as:
\[ 
\eta_A = \smallsum\ \ket{i} \otimes \ket{i}
   \qquad\qquad
   \epsilon_A = \smallsum\ \bra{i} \otimes \bra{i} 
\]
\end{example}

In category theory, closure refers to the fact that one can represent sets of morphisms $\mathcal C(A,B)$ as the states of another object, which we could denote as $A\Rightarrow B$.  Compact closed categories have the very convenient feature that we can take: 
\[
A\Rightarrow B := A\otimes B
\]
We saw this feature in Proposition \ref{prop:closurestring},  under the name `process-state duality'.

Just as we simplified before by restricting to \textit{strict} monoidal categories, here we simplify from a more general notion of compact closed categories to symmetrically self-dual compact closed categories. The former only require that $\epsilon_A \in \mathcal C(A^* \otimes A, I)$ and $\eta_A \in \mathcal C(I, A \otimes A^*)$ exists for some object $A^*$ (called the \textit{dual} of $A$). Then, `self-duality' means we can choose $A^*$ to just be $A$ again, and `symmetric' means this choice gets along well with symmetries.  This of course makes sense from a diagrammatic point of view (cf.~the right-most equations in~\eqref{eq:wire-yank-all}),  and as pointed out in~\cite{SelingerSelfDual}, is necessary for interpreting string diagrams as morphisms without ambiguity. 

\begin{remark}
  The definition of $\eta_A$ and $\epsilon_A$ from Example~\ref{ex:hilb-compact} depend on the choice of basis. We can avoid this by dropping self-duality, in which case we take $A^*$ to be the dual space and let:
  \[ \epsilon_A(\xi \otimes a) = \xi(a) \]
  This uniquely fixes $\epsilon_A$ (and therefore $\eta_A$) without making reference to a basis.
\end{remark}

So to summarise, we have the following correspondences between  diagrams and categories: 
\begin{center}
\begin{tabular}{|c|c|c|}     
\hline
\em circuit diagram \em & \em 
diagram \em & \em string diagram \em \\
\input{./figures/leiden3x.tikz} & \input{./figures/leiden1xx.tikz} & \input{./figures/compound-process-capscups-book.tikz} \\
i.e.~admits causal structure & i.e.~outputs to inputs & i.e.~all to all \\
\hline
$\Rightarrow$ plain SMC  &  $\Rightarrow$ traced SMC & $\Rightarrow$ compact closed C \\    
\hline
\end{tabular}
\end{center}
Notably, the more sophisticated  (i.e.~restrictive) kinds of diagrams correspond to the simplest kinds of SMCs, and vice versa.

Unsurprisingly, the category-theoretic definition of  adjoints is again more involved than its diagrammatic counterpart, simply for the reason that we now have to say carefully what `preserving diagrams under reflection' means in terms of the language of SMCs.

\begin{definition}
For a strict SMC $\mathcal C$, a \textit{dagger functor} assigns to each morphism $f : A \to B$ a new morphism $f^\dagger : B \to A$ such that:
\[ 
  (f^\dagger)^\dagger = f \qquad
  (g \circ f)^\dagger = f^\dagger \circ g^\dagger \qquad
  (f \otimes g)^\dagger = f^\dagger \otimes g^\dagger \qquad
  \sigma_{AB}^\dagger = \sigma_{BA}
\]
\end{definition}

In category-theoretic parlance, this is therefore a `strict symmetric monoidal functor that is \textit{identity-on-objects} and \textit{involutive}'.

\begin{definition}
  A strict dagger-symmetric monoidal category ($\dagger$-SMC) is a strict SMC with a chosen dagger functor. For a \textit{dagger-compact closed category} (with symmetric self-duality), we additionally assume $\eta_A^\dagger = \epsilon_A$.
\end{definition}

\begin{example}
For FHilb, the dagger functor sends each linear map $f : A \to B$ to its linear-algebraic adjoint $f^\dagger : B \to A$, i.e.~the unique linear map such that:
\[ 
\braket{b}{f(a)} = \braket{f^\dagger(b)}{a} 
\]
for all $a \in A, b\in B$.
\end{example}

There is a tight connection between equations between linear maps and equations between string diagrams. Namely, any equation between string diagrams involving linear maps $f, g, h, \ldots$ (and their adjoints) holds generically---i.e.~for \underline{all} linear maps $f, g, h, \ldots$---precisely when the diagrams themselves are equal. This is formally stated as a \textit{completeness} theorem:

\begin{theorem}\label{thm:Selinger}
  FHilb is complete for string diagrams.
\end{theorem}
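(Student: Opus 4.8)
The plan is to prove the two directions of the completeness statement separately. The \emph{soundness} direction---if two string diagrams are equal, then the corresponding equation of linear maps holds for all interpretations---is immediate: by Example~\ref{ex:hilb-compact} (and its dagger version) FHilb is a dagger-compact closed category, so by Theorem~\ref{thm:string-cmpcc} any equation forced by the string-diagram axioms is automatically satisfied once the boxes are assigned actual linear maps. All the work is in the \emph{completeness} direction: given two string diagrams $D_1, D_2$ that are \emph{not} equal as string diagrams, I must exhibit a single assignment of linear maps to the boxes under which $D_1$ and $D_2$ denote different linear maps.

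First I would reduce to states. Using process-state duality (Proposition~\ref{prop:closurestring}) I bend every input wire up to an output, turning $D_1$ and $D_2$ into states of a common type without changing whether they are equal; so I may assume both diagrams have no inputs. By Theorem~\ref{thm:string-cmpcc} such a diagram, viewed up to the dagger-compact axioms, is nothing but a finite graph whose vertices are labelled by the box symbols (with a mark recording whether each occurs as $f$, $f^\dagger$, $f^*$ or $f^T$) and whose edges are the wires; $D_1 \neq D_2$ then means these labelled graphs are non-isomorphic. The goal becomes: non-isomorphic labelled graphs can always be separated by some interpretation in FHilb.

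The engine of the separation is a polynomial argument. I would assign to each generating box $f$ a matrix $(f_{ij})$ whose entries are treated as independent formal indeterminates over $\mathbb C$, with the conjugate entries appearing in adjoint boxes treated as further independent indeterminates (legitimate because a polynomial in $z$ and $\bar z$ vanishing identically as a function on $\mathbb C$ has all coefficients zero). Interpreting a string diagram then amounts to tensor contraction along the wires, exactly as in Example~\ref{ex:hilb-compact}, and each entry of the resulting tensor is a polynomial in these indeterminates: a sum, over all labellings of the internal wires by basis indices, of the product of the corresponding box entries. Since $\mathbb C$ is infinite, two diagrams denote the same linear map for \emph{all} numerical interpretations if and only if they yield the \emph{same} formal polynomial tensor; so it suffices to prove that non-isomorphic labelled graphs give distinct polynomials.

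The crux---and the step I expect to be the main obstacle---is this injectivity of the assignment diagram $\mapsto$ polynomial tensor, because tensor contraction can in principle collapse different wirings into the same polynomial through coincidences among basis indices. I would defeat this by taking the underlying Hilbert spaces of large enough dimension and choosing the box matrices so that each monomial appearing in the expansion records, without collision, one full labelling of the graph's edges by distinct indices; from the monomials of the polynomial one can then read off the number of vertices of each label together with their exact adjacency, and hence reconstruct the labelled graph up to isomorphism. Making this reconstruction precise---i.e.\ pinning down how much connectivity each monomial remembers and ruling out accidental cancellation between distinct edge-labellings---is the technical heart of the argument, and is exactly where Selinger's original proof concentrates its effort; the remaining bookkeeping (handling the dagger/conjugate marks and the symmetric self-dual cups and caps) is routine once the index-tracking scheme is fixed.
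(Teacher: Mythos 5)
The paper does not actually prove this theorem: it is quoted as an imported result, with the proof deferred to Selinger \cite{Selingercompleteness} (see the ``Reference and further reading'' subsection), so there is no in-paper argument to compare against and your proposal must stand on its own. Judged that way, it is a faithful roadmap of Selinger's strategy --- reduce to states by process-state duality, identify morphisms of the free dagger-compact category with isomorphism classes of labelled graphs, interpret boxes by matrices of independent indeterminates (with conjugated entries as fresh indeterminates), and argue that the resulting polynomial determines the graph --- but it is not yet a proof, because the one step you yourself flag as ``the technical heart'' is exactly the step you do not carry out. Everything before it is soft; the theorem lives or dies on the injectivity of the passage from labelled graphs to polynomial tensors, and ``choose the dimension large enough so that each monomial records a labelling without collision'' is a statement of the goal, not an argument.

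To close the gap you would need at least the following. First, rule out cancellation: in the expansion of a tensor contraction over basis indices, each monomial occurs with coefficient equal to the \emph{number} of index assignments producing it, a nonnegative integer, so distinct assignments may collide but can never cancel; this disposes of the worry you raise and should be said, not feared. Second, extract the graph: take the dimension of every object to be at least the number of wires in the two diagrams being compared (completeness only requires separating one pair at a time, so dimensions may depend on the pair); then every monomial in which all wire indices are pairwise distinct has coefficient exactly one, and such a monomial lists, for each box occurrence, which index sits on which port, hence determines the wiring up to a permutation of occurrences of identically-labelled boxes --- which is precisely graph isomorphism, so nothing is lost. Third, handle the degenerate pieces: closed loops containing no boxes contribute scalar factors $D^{n}$, so diagrams differing only in loop count are separated once $D \ge 2$, and the dagger/transpose/conjugate marks must be threaded through the index bookkeeping (your convention of treating $\bar z$ as an independent indeterminate is legitimate for exactly the reason you give). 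With these three points supplied, your outline becomes Selinger's proof; without them it is an accurate table of contents for it.
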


In other words, if we don't know anything about the linear maps in a diagram (i.e.~we treat them as `black boxes'), diagrammatic reasoning is already the best we can do.

\subsection{Reference and further reading}

The quantum teleportation protocol first appeared in \cite{tele}. A diagrammatic derivation of teleportation first appeared in \cite{LE1}, and independently, also in \cite{Kauffman}.  The four variations of boxes as in (\ref{eq:smapALLFOURnew}) first appeared in \cite{SelingerCPM}. The use of caps and cups also already appeared in \cite{Penrose}. Proposition \ref{prop:closurestring} is known in quantum theory as the Choi-Jamio\l{}kowski isomorphism \cite{jamiolkowski, choi}.

The corresponding category-theoretic axiomatisation is due to Abramsky and Coecke \cite{AC1}, which was the first paper on categorical quantum mechanics, building further on Kelly's compact closed categories \cite{Kelly} by adjoining a dagger functor. Dagger compact categories had already appeared in \cite{BaezDolan} as a  special case of a more general construct in $n$-categories. Theorem \ref{thm:Selinger} is due to Selinger \cite{Selingercompleteness}. A comprehensive (at the time) survey of monoidal categories and their various graphical languages is given in~\cite{SelingerSurvey}.

\section{Quantum processes}

\begin{quote}
\em I would like to make a confession which may seem immoral: I do not believe   
absolutely in Hilbert space any more.\par \em \hfill    --- John von Neumann, letter to Garrett Birkhoff, 1935.         
\end{quote}

\noindent
Let us summarise what we have seen thus far.  In the first section we introduced a general formalism to reason about interacting processes based on diagrams.  In the following section we saw how the simple assumption that the diagrams are string diagrams allows a kindergartner to derive quantum teleportation. And then we bumped into a caveat, which we claimed came from something called \textit{causality}.

We claimed before that, rather than simply `doing' a cap effect, Aleks must perform a non-deterministic processes which might instead yield a cap-effect with some error. But that's just a bunch of words. None of the ingredients coming from string diagrams can actually help us derive this fact.

There's also a second issue here. Before Bob can correct the error, he must know which error $U_i$ happened. The only way this is possible is if Aleks picks up the phone and tells him. This requires distinguishing a phone call from a quantum system in our diagrams, i.e.~diagrams will need to involve two kinds of types: classical types and quantum types, and their distinct behaviour should also be evident in diagrammatic terms. In fact, it is the diagrammatic  formulation of quantum types that will allow us to  define this (in)famous causality postulate  which brought us here in the first place. Moreover, we will then be able to diagrammatically derive the fact that the cap-state cannot be invoked with certainty.

This section concerns specification of a  very general kind of quantum type.  Despite its generality, we will already be able to prove some highly non-trivial features of quantum systems, most notably, the no-broadcasting theorem.  This sets the stage for exploring alternative models  of quantum theory, which go beyond Hilbert spaces.

\subsection{Quantum types}

Back in Section~\ref{sec:procs-as-diagrams}, we interpreted a state meeting an effect as a probability, i.e.~a positive number:
\begin{equation}\label{eq:born-again}
  \input{./figures/state_test_paper.tikz}
\end{equation}
At that time, we didn't have a language rich enough to give a notion like `positivity', but thanks to reflection (i.e.~adjoints and conjugates) we now do. A complex number is positive precisely when it is the product of some number $\lambda$ and its conjugate $\overline{\lambda}$. Being the composition of a number and its conjugate thus gives us a generalisation of the condition of a number being positive:
\begin{equation}\label{eq:gen-positive}
  \input{./figures/gen-positive.tikz}
\end{equation}
So, we could use expressions like the one on the right to compute probabilities, and indeed this is how it is typically done in quantum theory. However, there is a way to have our cake and eat it too. Namely, we can retain the simplicity of~\eqref{eq:born-again} while secretly ensuring the result is always a positive number.

The trick is to double everything. By pairing two wires, we make a new thicker kind of wire:
\ctikzfig{doubled_wire}
which gives us a \textit{quantum} system type. We can then keep un-doubled types around for representing classical systems---something we'll use extensively in part II of this overview---obtaining this simple distinction:
\[
{\mbox{classical} \over\mbox{quantum}} = {\mbox{single wire} \over\mbox{double wires}}
\]

On these doubled wires, we can then build doubled boxes, which consist of the box itself, along with its conjugate:
\ctikzfig{double_process_1_to_1z}
When we apply this recipe to states and effects, we get a generalised positive number whenever the two meet:  
\ctikzfig{doubled_state_test}
Thus, the thing we called the generalised Born rule, applied to these new doubled processes, becomes what some will recognise as the  usual Born rule from quantum theory.

Doubling preserves string diagrams:
\[
\input{./figures/arbitrary_kleinx.tikz}\ \ \ \mapsto \ \ \ \input{./figures/double_arbitrary_klein.tikz}   
\]
as long as we are a bit careful about where the wires go for boxes with many inputs/outputs:
\ctikzfig{double_arbitrary}
In particular, this gives us the following expression for the doubled cap:
\[
\begin{tikzpicture}
	\begin{pgfonlayer}{nodelayer}
		\node [style=none] (0) at (-1, 0.5) {}; 
		\node [style=none] (1) at (1, 0.5) {};  
	\end{pgfonlayer}
	\begin{pgfonlayer}{edgelayer}
		\draw [style=boldedge, in=-90, out=-90, looseness=1.75] (0.center) to (1.center);    
	\end{pgfonlayer}
\end{tikzpicture}\ \ :=\ \input{./figures/double_cup_derivation1.tikz}\vspace{-2mm}
\]
and similarly for the cup.

It also preserves all equations between string diagrams, so, for example, our derivation of teleportation carries over to the doubled world.  What about the  converse? Do equations in  doubled diagrams carry over to their un-doubled counterparts?  The answer turns out to be no,  but luckily this is a feature, not a bug.  

\begin{proposition}\label{prop:phases}
In a theory  described by string diagrams, if:
  \beq\label{eq:phaseelim1}
\dmap{\widehat f} \ =\ \dmap{\widehat g}
  \eeq
then there exist numbers $\lambda$ and $\mu$, with  $\widehat \lambda=\widehat \mu$, such that:
\begin{equation}\label{eq:phaseelim2}
  \scalar{\lambda} \, \map{f} \ =\ \scalar{\mu} \, \map{g}
\end{equation}
The converse also holds provided $\widehat \lambda=\widehat \mu$ is cancellable. 
\end{proposition}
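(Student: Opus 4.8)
The plan is to reduce the claim about processes to the corresponding claim about \emph{states}, where a doubled state $\widehat\psi$ behaves exactly like the operator $\psi$-composed-with-$\overline\psi$, so that plugging effects into its two legs recovers $\psi$ up to a scalar. Throughout I use that doubling preserves string diagrams, and in particular sends cups and caps to their doubled counterparts (as exhibited for the doubled cap in the excerpt).

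\textbf{Reduction to states.} Because doubling commutes with the wire-bending that underlies process-state duality (Proposition~\ref{prop:closurestring}), the name (Choi state) of $\widehat f$ is the doubling $\widehat\psi$ of the name $\psi$ of $f$, and likewise $\widehat\phi$ for $g$ with name $\phi$. Hence the hypothesis $\widehat f=\widehat g$ is equivalent to $\widehat\psi=\widehat\phi$. Since process-state duality is a scalar-respecting bijection, it suffices to produce numbers $\lambda,\mu$ with $\widehat\lambda=\widehat\mu$ and $\lambda\,\psi=\mu\,\phi$; bending the wire back then yields $\lambda\, f=\mu\, g$, which is exactly~\eqref{eq:phaseelim2}.

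\textbf{The state computation.} Undoubling $\widehat\psi=\widehat\phi$ presents it as an equation $\psi\otimes\overline\psi=\phi\otimes\overline\phi$ of ordinary (un-doubled) states on the paired wire. Plugging the transpose effect $\overline\phi^\dagger$ into the conjugate (second) leg of both sides collapses that leg to a number, leaving $\lambda\,\psi=\mu\,\phi$ with $\lambda:=\overline\phi^\dagger\circ\overline\psi$ and $\mu:=\overline\phi^\dagger\circ\overline\phi$. To verify $\widehat\lambda=\widehat\mu$ I would instead plug effects into \emph{both} legs of the same equation, namely $\phi^\dagger$ into the first leg and $\overline\phi^\dagger$ into the second: the left-hand side then evaluates to $(\phi^\dagger\circ\psi)(\overline\phi^\dagger\circ\overline\psi)=\overline\lambda\,\lambda=\widehat\lambda$ and the right-hand side to $(\phi^\dagger\circ\phi)(\overline\phi^\dagger\circ\overline\phi)=\mu\,\overline\mu=\widehat\mu$, where I use the self-conjugacy $\overline{\phi^\dagger\circ\phi}=\phi^\dagger\circ\phi$. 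Since the two diagrams were assumed equal, the two numbers agree, so $\widehat\lambda=\widehat\mu$.

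\textbf{Converse.} Here I would simply double the hypothesis. As doubling is a monoidal functor respecting scalar multiplication, $\lambda\,\psi=\mu\,\phi$ (equivalently $\lambda\, f=\mu\, g$) gives $\widehat\lambda\,\widehat\psi=\widehat\mu\,\widehat\phi$; substituting $\widehat\lambda=\widehat\mu$ yields $\widehat\lambda\,\widehat\psi=\widehat\lambda\,\widehat\phi$, and cancelling the common factor---precisely where the hypothesis that $\widehat\lambda=\widehat\mu$ be cancellable enters---gives $\widehat\psi=\widehat\phi$, i.e.\ $\widehat f=\widehat g$.

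I expect the main obstacle to be the conjugate-leg bookkeeping in the state computation: arranging the conventions so that the plugged effects genuinely yield $\widehat\lambda$ and $\widehat\mu$ rather than unrelated scalars, and in particular justifying $\overline{\phi^\dagger\circ\phi}=\phi^\dagger\circ\phi$ purely from the involutive and functorial behaviour of the adjoint and conjugate. A secondary subtlety is that the witnesses produced are $\lambda=\langle\psi|\phi\rangle$ and $\mu=\langle\phi|\phi\rangle$, which may both vanish in degenerate models; this still satisfies the existence claim, but it indicates that the genuinely informative content of the proposition---doubling is faithful exactly up to a global phase---lives in the non-degenerate regime.
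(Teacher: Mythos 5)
Your proposal is correct and is essentially the paper's own argument: the paper likewise unfolds $\widehat f=\widehat g$ into $\mapconj{f}\,\map{f}=\mapconj{g}\,\map{g}$, defines $\lambda$ and $\mu$ as the contractions of $f$ and of $g$ against $g$ itself, and obtains \eqref{eq:phaseelim2} by contracting only the conjugate half and $\widehat\lambda=\widehat\mu$ by contracting both halves, exactly as you do after bending wires to states. The only cosmetic differences are your detour through process-state duality and your appeal to self-conjugacy of $\phi^\dagger\circ\phi$, which is not actually needed since $(\phi^\dagger\circ\phi)(\overline{\phi}^\dagger\circ\overline{\phi})=\overline{\mu}\,\mu=\widehat\mu$ holds by the definition of doubling on scalars alone.
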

\begin{proof}
Let $\lambda$ and $\mu$ be:  
\[
\scalar{\lambda} \ := \ \ \input{./figures/elim3bis.tikz}\qquad \quad\qquad\scalar{\mu} \ := \ \ \input{./figures/elim4bis.tikz}  
\]
Unfolding \eqref{eq:phaseelim1} yields:
\beq\label{eq:phaseelim3}
\mapconj{f}\ \map{f} \ =\ \mapconj{g}\ \map{g} 
\eeq
So:
\[
\dscalar{\widehat \lambda} \ \ \input{./figures/elim6_paper.tikz} \ \   \dscalar{\widehat \mu}
\]
and:
\[
\scalar{\lambda} \, \map{f}\ \, \input{./figures/elim5_paper.tikz}  \ \, \scalar{\mu} \, \map{g}  
\]
\end{proof}

Typically, `cancellable' means non-zero, and the condition that $\widehat\lambda = \widehat\mu$ means the absolute values of $\lambda$ and $\mu$ coincide. In that case, the condition \eqref{eq:phaseelim2} is more commonly referred to as `equal up to a (global) complex phase $\mu\over\lambda$'. One of the uglier aspects of the standard quantum formalism is indeed that it contains redundant numbers  of the form $\mu\over\lambda$, which have no physical significance.  Thus a nice side-effect of doubling is precisely removing this redundancy.

\subsection{Pure processes and discarding}

A state $\psi$ is called \textit{normalised} if composing with its adjoint yields  1, i.e.~the empty diagram:
\[ 
\kpointbraket{\psi}{\psi} \ =\ \emptydiag   
\]
When we pass to the doubled world, we therefore have a way to throw away states $\widehat\psi$ arising from normalised states $\psi$. Simply connect the two halves together:
\[
\input{./figures/discard_prop_proof_paper.tikz}\ \  = \ 
\kpointbraket{\psi}{\psi}\ =\ \ \emptydiag
\]
This new effect, which has no counterpart in the un-doubled underlying theory,  is called \textit{discarding}, and we denote it as follows:
\[
\discard\ \, := \ \ \input{./figures/trace_def.tikz}
\]
This then immediately gives rise to new boxes  as well: 
\beq\label{eq:quantummapx}
\input{./figures/quantummapx.tikz}
\eeq
which we call \textit{impure}, or \textit{mixed}, processes---for reasons that shall become clear in the follow-up paper to this one. These impure processes naturally occur, as the diagram indicates, when we are only considering part of a composite system, and ignoring  (i.e.~discarding) another part.  In other words, we are considering systems that are in interaction with their \textit{environment}. A \textit{pure} process is then one that doesn't involve discarding, i.e.~that doesn't have  a wire connecting its two halves. Impure processes will typically be denoted by $\Phi$, and and impure states by $\boldsymbol\rho$, in contrast to the pure ones which carry a `hat'.

Discarding multiple systems is the same as discarding one, bigger system. Hence we can put any process consisting of pure processes and discarding in the form of \eqref{eq:quantummapx} by grouping all of the discarding processes together into a single effect:
\[
  \input{./figures/arbitrary_w_discardbis.tikz}
  \quad\mapsto\quad
  \input{./figures/arbitrary_w_discardbisA.tikz}
\]
As a consequence:

\begin{proposition}\label{prop:q-boxclosedness}
Any string diagram consisting of processes of the form (\ref{eq:quantummapx}) is again of that form. Hence they form a process theory.
\end{proposition}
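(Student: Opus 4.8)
The plan is to show directly that wiring together processes of the form \eqref{eq:quantummapx} yields another process of that form, so that closure under wiring---condition (iii) of Definition~\ref{def:process-theory}---is immediate. Recall that each such process is a \emph{pure} doubled process with a discarding effect attached to some of its outputs, and that two facts are already available: doubling preserves string diagrams (so any string diagram built from doubled boxes, together with the doubled cups and caps, is itself the double of a single pure process), and discarding several systems equals discarding one larger system.

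First I would take an arbitrary string diagram $D$ whose boxes are each of the form \eqref{eq:quantummapx}. Replacing every box by its defining picture, $D$ becomes a string diagram whose ingredients are (i) doubled pure boxes, (ii) the doubled cups and caps supplied by the ambient string-diagram structure, and (iii) a discarding effect sitting on certain wires. The key structural observation is that every discarding effect is \emph{terminal}: a discarded wire is an output that is never fed into any later box. Hence no wire of $D$ passes `through' a discard, and the discards may be slid freely towards the boundary of the diagram.

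Next I would separate $D$ into its pure skeleton and its discards. Deleting all discarding effects leaves a string diagram composed purely of doubled boxes and doubled cups and caps, now with the formerly-discarded wires dangling as extra outputs. By the fact that doubling preserves string diagrams, this skeleton is the double $\widehat h$ of a single pure process $h$---namely the pure process obtained by reading the same underlying diagram in the un-doubled theory, retaining the dangling wires as genuine outputs. Re-attaching the discards, and grouping them into a single discarding effect on the combined dangling system, then exhibits $D$ as $\widehat h$ with one discard on part of its output: precisely the form \eqref{eq:quantummapx}.

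The main thing to get right is the second step: justifying that all discards can be collected at the boundary even when they occur on internal wires or meet cups and caps. This is where terminality does the work---because a discard is an effect with no outputs, it can be transported along any wire to the outside of the diagram without creating or destroying any connection, and the several resulting boundary discards fuse into one by the principle that discarding many systems equals discarding one. Once this is in hand the rest is bookkeeping, and the resulting closure property gives, via condition (iii) of Definition~\ref{def:process-theory}, that these processes indeed form a process theory.
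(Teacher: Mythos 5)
Your proposal is correct and follows essentially the same route as the paper, which proves the proposition by the observation immediately preceding it: slide all discarding effects to the boundary (possible because a discarded wire is never an input to anything else), use the fact that doubling preserves string diagrams to recognise the remaining skeleton as a single pure doubled process, and fuse the discards into one since discarding several systems equals discarding one larger system. Your write-up merely makes explicit the terminality argument that the paper leaves implicit.
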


Another consequence is that any diagram of pure processes and discarding can be put into a standard form involving a pure process and a single discard. In other words, any (possibly impure) process $\Phi$ has a \textit{purification} $\widehat f$:
\beq\label{eq:purification}
\dmap{\Phi} \ =\ \ \input{./figures/purificationI.tikz}
\eeq

Condition (\ref{EQ:daggeraxiom}) on adjoints now allows us establish an important connection between $\otimes$-separability, and purity of the \textit{reduced state} of a composite system, i.e.~what remans if we discard part of it:
\ctikzfig{bipartite_disc_rho}
  
\begin{proposition}\label{prop:reduce-pure}   
Consider a theory that admits string diagrams and in which adjoints obey \eqref{EQ:daggeraxiom}. If the reduced state of a bipartite state $\boldsymbol\rho$ is pure:  
  \begin{equation}\label{eq:rho-disc-pure} 
    \input{./figures/bipartite_disc_rho.tikz}\  \ =\ \dkpoint{\,\widehat\phi\,}      
  \end{equation}
 then it $\otimes$-separates as follows:
\[ 
 \dbistate{\boldsymbol\rho} \ =\
 \dkpoint{\,\boldsymbol\rho'}\, \dkpoint{\,\widehat\phi'\,}           
\] 
\end{proposition}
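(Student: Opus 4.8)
The plan is to reduce the statement to the adjoint condition \eqref{EQ:daggeraxiom}, using purification and process--state duality as the two bridges. First I would purify $\boldsymbol\rho$ by \eqref{eq:purification}: there is a pure state $\widehat\psi$ on the three systems $A$, $B$, and some environment $E$, such that $\boldsymbol\rho$ is recovered by discarding $E$. Writing $C := A \otimes E$ for everything \emph{except} the part whose reduced state is assumed pure, I can regard the underlying pure state $\psi$ as a bipartite state on $C \otimes B$, and hence, by process--state duality (Proposition~\ref{prop:closurestring}), as a process $g : C \to B$ in the underlying theory.

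The key step is to read off what the hypothesis \eqref{eq:rho-disc-pure} says at the level of $g$. Taking the reduced state of $\boldsymbol\rho$ means discarding $A$, which in the purified picture discards all of $C = A \otimes E$ at once. Since discarding one half of a doubled pure state joins its two underlying copies, the reduced state on $B$, viewed in the underlying theory, is exactly $g \circ g^\dagger$. The assumption that this reduced state equals a pure state $\widehat\phi$ therefore unfolds to the honest underlying equation $g \circ g^\dagger = \ket{\phi'}\!\bra{\phi'}$; that is, $g \circ g^\dagger$ is $\circ$-separable (rank one).

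Now I would apply \eqref{EQ:daggeraxiom}, in the form obtained by taking $g^\dagger$ for its ``$f$'', to conclude that $\circ$-separability of $g \circ g^\dagger$ forces $\circ$-separability of $g$ itself. Bending $g$ back into a state, exactly reversing the reasoning behind Proposition~\ref{prop:cupdiscrubbish}, shows that $\psi$ is $\otimes$-separable, so $\psi = \chi \otimes \phi'$ for a pure state $\chi$ on $C = A \otimes E$ and a pure state $\phi'$ on $B$. Reassembling, discarding $E$ acts only on the $\chi$-factor, so $\boldsymbol\rho$ becomes $\boldsymbol\rho' \otimes \widehat{\phi'}$, where $\boldsymbol\rho'$ is the (generally impure) state on $A$ obtained from $\widehat\chi$ by discarding $E$, and $\widehat{\phi'}$ is pure on $B$. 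This is the desired $\otimes$-separation.

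The main obstacle is the bookkeeping forced by doubling. I must be sure that the diagrammatic hypothesis \eqref{eq:rho-disc-pure}, an equation between \emph{doubled} states, really does translate into the underlying equation $g \circ g^\dagger = \ket{\phi'}\!\bra{\phi'}$ with no residual global-phase freedom of the sort quarantined by Proposition~\ref{prop:phases}; this works out because both sides are already ``density-operator-level'' doubled objects, so the equation holds on the nose rather than merely up to a phase. The only genuinely substantive move is identifying the reduced state with $g \circ g^\dagger$ and then invoking \eqref{EQ:daggeraxiom} to pass from the reduced state to $g$; everything else is purification, process--state duality, and reattaching the discarded environment.
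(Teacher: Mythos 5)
Your proposal is correct and follows essentially the same route as the paper's own proof: purify $\boldsymbol\rho$, substitute into the hypothesis, unfold the resulting discard to obtain an honest underlying equation asserting that $g \circ g^{\dagger}$ is $\circ$-separable, invoke \eqref{EQ:daggeraxiom} to separate $g$ itself, and plug back in. Your side remark that no global-phase ambiguity arises (since the unfolded equation is between underlying diagrams on the nose, not an un-doubling of the form treated in Proposition~\ref{prop:phases}) is accurate and consistent with how the paper proceeds.
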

\begin{proof}
 Writing $\boldsymbol\rho$ in the form (\ref{eq:quantummapx}):
\begin{equation}\label{eq:unfold-form}
   \dbistate{\boldsymbol\rho}\, \ =\ \  \input{./figures/bistate-pfa-3.tikz} 
\end{equation}
and substituting this into (\ref{eq:rho-disc-pure}) we obtain: 
 \[
\input{./figures/bistate-disc-pfa-3.tikz} \ \  = 
\ \   \input{./figures/final1.tikz}   
\]
and hence:
\[
\input{./figures/bistate-disc-pfb-3.tikz} \ \ =\ \ \kpointconj{\phi}\kpoint{\phi}       
\]
Deforming this equation via process-state duality and transposition, we get:
\[
\input{./figures/final2NEW.tikz}
\]
Then, by (\ref{EQ:daggeraxiom}) there exist $\psi_1$ and $\psi_2$ such that:
\[
 \input{./figures/final3NEW-3.tikz} 
\]
hence:
\[
\input{./figures/final3NEW-3b.tikz}
\]
Plugging in to \eqref{eq:unfold-form} indeed yields a $\otimes$-separable state of the required form:
\[
\input{./figures/bistate-pfa-3-sep.tikz}\ \ =\ \ \input{./figures/bistate-pfa-3-sep2.tikz} 
\]
\end{proof}

By process-state duality this fact straightforwardly extend to processes:

\begin{proposition}\label{prop:discard-mix-pure-proc}   
Consider a theory  that admits string diagrams and in which adjoints obey \eqref{EQ:daggeraxiom}. If        
a  \em reduced process \em of a process $\Phi$ is pure:  
  \begin{equation}\label{eq:rho-disc-pure-proc}
\input{./figures/bekan7.tikz}\  \ =\ \dmap{\widehat f}  
  \end{equation}
 then it $\otimes$-separates as follows:
 \beq\label{eq:rho-disc-pure-proc2}
\input{./figures/bekan8.tikz}\  \ =\ \dkpoint{\,\boldsymbol\rho}\ \dmap{\widehat f}    
  \eeq
\end{proposition}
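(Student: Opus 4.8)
The plan is to derive this proposition from Proposition~\ref{prop:reduce-pure} by invoking process-state duality (Proposition~\ref{prop:closurestring}), exactly as the preceding sentence suggests; the point is that the doubled processes of the form \eqref{eq:quantummapx} form a process theory described by string diagrams (Proposition~\ref{prop:q-boxclosedness}), so cups, caps and hence process-state duality are available in the doubled world. The only real work is to check that bending a wire turns the process hypothesis \eqref{eq:rho-disc-pure-proc} into the bipartite-state hypothesis \eqref{eq:rho-disc-pure}, and then to bend the wire back at the end.

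Concretely, I would fold the input wire of $\Phi$ down using a doubled cup, so that process-state duality produces a bipartite doubled state $\boldsymbol\rho_\Phi$ in which one party consists of the folded-down input together with the retained output, and the other party is the system we wish to discard. First I would observe that the cup acts only on the input wire while the discard acts only on the system to be ignored, so the two operations sit on disjoint wires and commute; hence discarding that system from $\boldsymbol\rho_\Phi$ equals first discarding it from $\Phi$ and then folding the input down. By hypothesis the former reduced process is the pure $\widehat f$, and folding a pure process down with a (pure) cup introduces no wire connecting the two halves of the doubled diagram, so the resulting reduced state is again pure. This is precisely the hypothesis \eqref{eq:rho-disc-pure} of Proposition~\ref{prop:reduce-pure} applied to $\boldsymbol\rho_\Phi$. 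Applying that proposition---whose standing assumption \eqref{EQ:daggeraxiom} on adjoints is also in force here---yields a $\otimes$-separation $\boldsymbol\rho_\Phi = \boldsymbol\rho' \otimes \widehat\phi'$, with $\widehat\phi'$ the pure reduced state on the retained-output-and-folded-input party and $\boldsymbol\rho'$ the leftover state on the discarded system. Finally I would fold the input wire back up, which is the inverse leg of process-state duality: this turns $\widehat\phi'$ back into the pure process $\widehat f$ and leaves $\boldsymbol\rho'$ untouched, delivering the claimed form \eqref{eq:rho-disc-pure-proc2}.

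The one point that needs genuine care---and where I expect the difficulty to be bookkeeping rather than anything deep---is tracking which tensor factor each wire lands in after folding, so that the discard in \eqref{eq:rho-disc-pure-proc} really corresponds to discarding the second party of $\boldsymbol\rho_\Phi$ in the sense of Proposition~\ref{prop:reduce-pure}, and that purity is transported faithfully in both directions: a pure doubled process has no half-to-half connecting wire, its dual state has none either, and conversely. Once the disjointness of the cup and the discard is noted, every remaining step is a wire deformation, so no diagrammatic identity beyond those already established is required.
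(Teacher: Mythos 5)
Your proposal is correct and follows essentially the same route as the paper's proof: bend the input wire into a cup so that $\Phi$ becomes a bipartite state whose reduced state is pure, apply Proposition~\ref{prop:reduce-pure}, and unbend. The bookkeeping you flag (the cup and the discard acting on disjoint wires, purity being transported through the bending) is precisely the content of the paper's terse ``bend the wire \ldots\ unbend the wire and we're done'' argument.
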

\begin{proof}
Bend the wire in (\ref{eq:rho-disc-pure-proc}):   
  \[
  \input{./figures/bekan9.tikz}\ \ = \ \ \input{./figures/bekan10.tikz}
\]
Treating the two rightmost wires above as a single system, this is the reduced state of a bipartite state. Since the reduced state is pure, by Proposition~\ref{prop:reduce-pure} it separates as follows:
\[
\input{./figures/bekan11.tikz}\ \ = \ \dkpoint{\,\boldsymbol\rho}\  \input{./figures/bekan10.tikz}
\]
Unbend the wire and we're done.
\end{proof}

\subsection{No-broadcasting}    

A \textit{cloning process} $\Delta$ is a process that takes any state as input and produces  two copies of that state as output:
\beq\label{eq:cloning}
\input{./figures/cloneboxstated.tikz} 
\eeq
It is often said that a key difference between quantum and classical processes is that the latter admits cloning. In fact, this is not entirely true if one includes probabilistic classical states into the mix, in which case there is no way to `clone' a probability distribution either.

However, what is possible classically is \textit{broadcasting}. That is, there exists a process $\Delta$  such that, when a state $\boldsymbol\rho$ is fed in and either output is discarded, we are left we $\boldsymbol\rho$:
\begin{equation}\label{eq:rho-broadcast} 
  \input{./figures/rho-broadcast-paper.tikz}
\end{equation}
It is easily seen diagrammatically that broadcasting is indeed a weaker notion than cloning:
\ctikzfig{cloneboxstatedproof-paper1}
and similarly for discarding the other output. Rather than making explicit reference to  the state  $\boldsymbol\rho$, we can also give the broadcasting equations (\ref{eq:rho-broadcast}) in a state-less (or, if you want, point-less) form:  
   \begin{equation}\label{eq:norho-broadcast}     
    \input{./figures/broadcast-paper.tikz}
  \end{equation}
  
\begin{theorem} 
 If in a theory  described by string diagrams adjoints obey (\ref{EQ:daggeraxiom}), then the theory obtained by doubling and adjoining discarding cannot have a broadcasting process.      
  \end{theorem}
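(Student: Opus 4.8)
The plan is to assume a broadcasting process $\Delta$ exists and then force the quantum type to collapse to a trivial one, contradicting nontriviality. The engine of the argument is Proposition~\ref{prop:discard-mix-pure-proc}: any process whose reduced process is pure must $\otimes$-separate. The first step is to read off what the state-less broadcasting equations~\eqref{eq:norho-broadcast} actually assert, namely that discarding \emph{either} output of $\Delta$ returns the identity on the quantum type. The identity is manifestly pure: in the doubled picture it is just a pair of parallel wires with no wire joining the two halves, hence no discarding. So each of the two reduced processes of $\Delta$ is pure.

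Next I would apply Proposition~\ref{prop:discard-mix-pure-proc} to $\Delta$, using the reduced process obtained by discarding (say) the right-hand output. Since that reduced process is the identity $\widehat{\mathrm{id}}$, which is pure, the proposition forces $\Delta$ to $\otimes$-separate as $\boldsymbol\rho \otimes \widehat{\mathrm{id}}$ for some fixed state $\boldsymbol\rho$: the input runs straight through the identity to the left-hand output, while the right-hand output is prepared in $\boldsymbol\rho$ independently of the input.

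Then I would substitute this separated form into the \emph{other} broadcasting equation, the one that discards the left-hand output. In the separated form the left-hand output is just the input carried through the identity, so discarding it amounts to discarding the input and leaving the constant preparation $\boldsymbol\rho$ on the surviving wire. Broadcasting demands that this equal the identity, so we obtain an equation saying that the identity on the quantum type factors as ``discard the input, then prepare $\boldsymbol\rho$'' — that is, the identity is $\circ$-separable. This is exactly the pathology flagged after Proposition~\ref{prop:cupdiscrubbish}: a $\circ$-separable identity forces every state $\psi$ to satisfy $\psi = (\mathrm{number})\cdot\boldsymbol\rho$, collapsing the state space to one dimension. For any nontrivial quantum type this is a contradiction, so no broadcasting process can exist.

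I expect the real work to be bookkeeping in the doubled-and-discarding theory rather than any conceptual leap: one must confirm that the identity is pure in the precise sense required by Proposition~\ref{prop:discard-mix-pure-proc}, keep straight which output plays the role of the separated state $\boldsymbol\rho$ and which the reduced process $\widehat f$, and check that the two applications of the broadcasting equations consistently use the two different outputs. Note that the hypothesis that adjoints obey~\eqref{EQ:daggeraxiom} is used only through Proposition~\ref{prop:discard-mix-pure-proc}, so it need not be invoked directly in the argument above.
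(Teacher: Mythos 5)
Your proposal is correct and follows essentially the same route as the paper's own proof: invoke the purity of the identity as the reduced process of $\Delta$, apply Proposition~\ref{prop:discard-mix-pure-proc} to $\otimes$-separate $\Delta$, substitute into the other broadcasting equation to force the identity to be $\circ$-separable, and conclude triviality. Your closing observation that~\eqref{EQ:daggeraxiom} enters only via that proposition is also accurate.
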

\begin{proof}
By equation (\ref{eq:norho-broadcast}l) the reduced state of $\Delta$ is pure, namely a plain wire, so by Proposition \ref{prop:discard-mix-pure-proc} we have: 
  \beq\label{eq:broadcast-nopurify4}  
  \input{./figures/broadcast-nopurify4.tikz}
 \eeq 
for some state $\boldsymbol{\rho}$.  Hence it follows that:   
  \ctikzfig{broadcast-nopurify5}
  Since the identity is $\circ$-separable, so is every other process involving that type, and hence the system must be trivial for $\Delta$ to exist.
\end{proof}

\subsection{Category-theoretic counterpart}

Doubling and adding discarding has a categorical counterpart as well. Rather than building it up piecewise from doubled processes and discarding, the categorical construction just declares that all morphisms should be of the form \eqref{eq:quantummapx}:

\begin{definition}
  For a compact closed category $\mathcal C$, we can form a new compact closed category $\textrm{CPM}[\mathcal C]$ with objects $\widehat A$ for every $A \in \textrm{ob}(\mathcal C)$. The morphisms $f : \widehat A \to \widehat B$ are those morphisms $f \in \mathcal C(A \otimes A, B \otimes B)$ which are of the form:
  \ctikzfig{CPM-form}
  for some object $C$ and morphism $g : A \to C \otimes B$.
\end{definition}

One needs to do a bit of work to show that this is indeed a compact closed category. For instance, the parallel composition of $f : \widehat A \to \widehat B$ and $f' : \widehat A' \to \widehat B'$ should give something of the form:
\[ 
f \otimes f' \in \mathcal C((A \otimes A') \otimes (A \otimes A'), (B \otimes B') \otimes (B \otimes B'))
\]
which involves some reshuffling of wires. We refer to references in the next section for details.

The acronym CPM refers to `completely positive map', and indeed when you apply this construction to FHilb, you get the category whose morphisms are completely positive maps.

\subsection{Reference and further reading}  
  
The doubling construction was introduced in \cite{DeLL}, including the  proof of Proposition \ref{prop:phases}. Around the same time, the  generalisation to impure was introduced by Selinger \cite{SelingerCPM}. The idea that this can be done by adding the discarding process was put forward in \cite{SelingerAxiom}. 

The no-broadcasting theorem first appeared in \cite{Nobroadcast}, and our derivation from doubling and (\ref{EQ:daggeraxiom})   is novel. Another `generalised no-broadcasting theorem' is \cite{BBLW}, which rather than  process theories, concerns  generalised probabilistic theories.


\section{Causality}\label{sec:causality}
  
\begin{quote}
\em A new scientific truth does not triumph by convincing its opponents and making them see the light, but rather because its opponents eventually die, and a new generation grows up that is familiar with it.  
\par \em \hfill    --- Max Planck, 1936.      
\end{quote}

\noindent
The beginning of the previous century saw two revolutions in physics: quantum theory and relativity. While the first one is a  theory which associates probabilities to non-deterministic processes (usually) involving microscopic systems, the second concerns the geometry of spacetime. Evidently, since there is only one reality, those two theories should not contradict each other. Amazingly, this compatibility can already be obtained within the generality of diagrams, provided there are discarding effects. If  these discarding  effects happen to arise from doubling as described above, then many more results follow.

\subsection{Causal processes}

Suppose we apply  a process to some inputs, but then discard all of its outputs.  Then the performance has gone to waste, and we could as well  have simply discarded its inputs.  In fact, this obvious assumption is quite vital for even being able to perform science.  It allows one to `discard'  (i.e.~ignore) everything that does not directly affect an experiment,  such as for example,  stuff happening in  some other galaxy. However innocent and/or obvious this principle sounds, it has many striking consequences,  warranting an important-sounding name:

\begin{definition}\label{def:eq:causqmaps}
In a process theory where each system  has a  distinguished discarding effect, a process ${\Phi}$ is called \textit{causal} if we have: 
\beq\label{eq:causqmaps}
\begin{tikzpicture}
	\begin{pgfonlayer}{nodelayer}
		\node [style=none] (0) at (0, -1.25) {};
		\node [style=dmap] (1) at (0, 0) {${\Phi}$};
		\node [style=upground] (2) at (0, 1.5) {};
	\end{pgfonlayer}
	\begin{pgfonlayer}{edgelayer}
		\draw [style=boldedge] (1) to (0);
		\draw [style=boldedge] (1) to (2); 
	\end{pgfonlayer}
\end{tikzpicture}\ \ = \ \, \discard
\eeq
We call a theory causal if all its processes are.
\end{definition}

Note that \eqref{eq:causqmaps} should be read as discarding \underline{all} of the outputs on the LHS results in discarding \underline{all} of the inputs on the RHS. As a special case, states have no inputs,  so nothing has to be discarded in the RHS:
\[
 \begin{tikzpicture}
	\begin{pgfonlayer}{nodelayer}
		\node [style=dkpoint] (0) at (0, -0.5) {$\boldsymbol\rho$};
		\node [style=upground] (1) at (0, 0.75) {};
	\end{pgfonlayer}
	\begin{pgfonlayer}{edgelayer}
		\draw [style=boldedge] (0) to (1);
	\end{pgfonlayer}
\end{tikzpicture} \ \  =\  \emptydiag  
\]
This is the usual condition for a (possibly impure) quantum state being normalised. Similarly, effects have no outputs, so nothing has to be discarded in the LHS:
\[
\dkpointadj{\boldsymbol\rho} \ =\ \discard 
\]
Clearly, this is bad news for those who like variety:

\begin{theorem}\label{thm:only-one-effect}
In a causal theory, each system-type  admits only one effect: discarding.
\end{theorem}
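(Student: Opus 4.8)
The plan is to read off the result directly from Definition~\ref{def:eq:causqmaps}, specialised to the case of effects. Recall that an \emph{effect} is precisely a process with no outputs. Thus for an effect $\pi$ on a system-type $A$, the left-hand side of the causality equation~\eqref{eq:causqmaps}---``discard all of the outputs''---has nothing to discard, and so collapses to $\pi$ itself; the right-hand side is, by definition, the discarding effect $\discard$ on $A$. Hence for effects the causality condition is literally the equation $\pi = \discard$, exactly as was already noted in the displayed equation $\dkpointadj{\boldsymbol\rho} = \discard$ immediately preceding the theorem.

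First I would fix an arbitrary system-type $A$ and an arbitrary effect $\pi$ on $A$. Since the theory is causal, \emph{every} process is causal, and in particular $\pi$ is a causal process. Applying Definition~\ref{def:eq:causqmaps} to $\pi$, and using that $\pi$ has no outputs so that the left-hand side of~\eqref{eq:causqmaps} is just $\pi$, I obtain $\pi = \discard$. As $\pi$ was an arbitrary effect on $A$, this shows that discarding is the \emph{unique} effect on $A$; and since $A$ was arbitrary, the same holds for every system-type.

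There is essentially no technical obstacle here: the theorem is an immediate unwinding of the definition of causality in the degenerate case of zero outputs, and its force is conceptual---namely that demanding causality of \emph{all} processes collapses the space of effects on each type to a single point---rather than computational. The only point requiring a word of care is the bookkeeping convention that ``discarding all outputs'' of a process with no outputs is the empty operation on the output side, which is exactly what lets us identify the left-hand side of~\eqref{eq:causqmaps} with $\pi$ itself rather than with an expression still needing simplification.
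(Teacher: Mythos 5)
Your proposal is correct and matches the paper's own (essentially implicit) argument: the paper states the theorem immediately after observing that for an effect the causality equation degenerates to the effect itself equalling discarding, which is exactly the unwinding you perform. Nothing further is needed.
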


In the case of two systems this means that:
\[
\discard \, \discard 
\]
is the only available effect.  So now it should be clear that, if we restrict to causal effects, teleportation is  no longer possible:
\ctikzfig{telenoclass2dub}
The way around this problem is to avoid Theorem \ref{thm:only-one-effect} simply by having a classical output, corresponding to Aleks' outgoing phone call to Bob.  As we will see in the follow-up paper \cite{CQMII}, such a classical output is what enables one to accommodate the non-determinism alluded to in Section \ref{sec:adjoints-and-unitarity}.

\subsection{Evolution and Stinespring dilation}\label{sec:stinespring}

We now consider theories arising from doubling.  In our first result, we will actually derive something that is typically assumed \textit{a priori} in the standard formulation of quantum theory \cite{vN}.

\begin{theorem}\label{thm:purecausaliso}
In  a theory obtained by doubling and adjoining discarding, the following are equivalent  for any pure process $\widehat U$:    
 \ben
 \item $\widehat U$ is causal:    
\[
\ \ \begin{tikzpicture}
	\begin{pgfonlayer}{nodelayer}
		\node [style=none] (0) at (-1.75, -1.25) {};
		\node [style=dmap] (1) at (-1.75, 0) {${\widehat U}$};
		\node [style=upground] (2) at (-1.75, 1.5) {};
	\end{pgfonlayer}
	\begin{pgfonlayer}{edgelayer}
		\draw [style=boldedge] (2) to (1);
		\draw [style=boldedge] (1) to (0.center);
	\end{pgfonlayer}
\end{tikzpicture}\, \ = \  \discard
\]
 \item $U$ is an isometry: 
\[
\input{./figures/isometry.tikz} 
\]
 \item and, $\widehat U$ is an isometry:   
\[
\input{./figures/isometrydub.tikz}
\]
\een 
\end{theorem}
\begin{proof}
Unfolding the causality equation, we have:
\[
  \input{./figures/isometryproof1.tikz}
\]
and we recover (\ref{eq:isometry}) simply  by un-bending, so $1 \Leftrightarrow 2$. $2 \Leftrightarrow 3$ follows by Proposition \ref{prop:phases}.
\end{proof}

\begin{corollary}\label{cor:purecausalisobis}
Under the assumptions of the previous theorem,  the following are equivalent for pure processes:
\ben
\item $\widehat U$ is causal and invertible, and
\item $\widehat U$ is unitary.
\een 
\end{corollary}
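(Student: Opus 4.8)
The plan is to derive the corollary directly from Theorem~\ref{thm:purecausaliso}, reducing everything to the equivalence (causal $\Leftrightarrow$ isometry) for pure processes, together with the standard fact that an invertible isometry is unitary. Throughout, recall that $\widehat U$ is \emph{unitary} precisely when both $\widehat U$ and its adjoint $\widehat U^\dagger$ are isometries, i.e.\ when $\widehat U^\dagger \circ \widehat U = 1$ and $\widehat U \circ \widehat U^\dagger = 1$.

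For the direction $(2) \Rightarrow (1)$, I would suppose $\widehat U$ is unitary. Then in particular $\widehat U$ is an isometry, so the equivalence $1 \Leftrightarrow 3$ of Theorem~\ref{thm:purecausaliso} gives that $\widehat U$ is causal. Moreover unitarity supplies a two-sided inverse outright: since $\widehat U^\dagger \circ \widehat U = 1 = \widehat U \circ \widehat U^\dagger$, the adjoint $\widehat U^\dagger$ \emph{is} an inverse of $\widehat U$, so $\widehat U$ is invertible. This direction therefore needs no real work beyond unpacking definitions.

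For the converse $(1) \Rightarrow (2)$, I would assume $\widehat U$ is causal and invertible, and let $V$ denote an inverse, so that $V \circ \widehat U = 1 = \widehat U \circ V$. Causality together with the equivalence $1 \Leftrightarrow 3$ of Theorem~\ref{thm:purecausaliso} yields that $\widehat U$ is an isometry, i.e.\ $\widehat U^\dagger \circ \widehat U = 1$. The key step is now the classical observation that, in the presence of a genuine inverse, the left inverse must coincide with it: compute
\[
\widehat U^\dagger \;=\; \widehat U^\dagger \circ (\widehat U \circ V) \;=\; (\widehat U^\dagger \circ \widehat U) \circ V \;=\; V,
\]
using associativity of $\circ$, the relation $\widehat U \circ V = 1$, and the isometry identity $\widehat U^\dagger \circ \widehat U = 1$. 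Hence $\widehat U^\dagger = V$ is a two-sided inverse, so in particular $\widehat U \circ \widehat U^\dagger = \widehat U \circ V = 1$. Together with $\widehat U^\dagger \circ \widehat U = 1$ this shows $\widehat U^\dagger$ is also an isometry, whence $\widehat U$ is unitary.

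I do not anticipate a genuine obstacle, since the argument is short and rests entirely on Theorem~\ref{thm:purecausaliso} and the definition of unitarity. The only point requiring care is the bookkeeping in the converse: one must resist concluding unitarity from the isometry condition $\widehat U^\dagger \circ \widehat U = 1$ alone (an isometry need not be unitary without invertibility) and instead use invertibility to force the adjoint to be the inverse.
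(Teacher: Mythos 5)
Your argument is correct and is exactly the intended route: the paper states this as an immediate corollary of Theorem~\ref{thm:purecausaliso} (causal $\Leftrightarrow$ isometry for pure processes) combined with the standard algebraic fact that an invertible isometry is unitary, which is precisely the computation $\widehat U^\dagger = \widehat U^\dagger \circ (\widehat U \circ V) = V$ you carry out. Your closing caution about not conflating isometry with unitarity is well placed but raises no gap.
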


Dropping purity yields a standard result of quantum information theory:

\begin{theorem}[Stinespring dilation]
Under the assumptions of the previous theorem,  for every causal process $\Phi$ there exists an isometry $\widehat U$ such that:
\beq\label{eq:prestinespring}
 \dmap{\Phi} \ =\ \ \input{./figures/quantummapcaus.tikz}   
\eeq
\end{theorem}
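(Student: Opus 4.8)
The plan is to read the dilating isometry straight off the purification supplied by Proposition~\ref{prop:q-boxclosedness}, using the observation that causality of $\Phi$ forces its purifying map to be causal as well, at which point Theorem~\ref{thm:purecausaliso} promotes that map to an isometry. So the strategy is ``purify, then notice the purification is already causal.''

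First I would apply the purification statement~\eqref{eq:purification}: every process $\Phi$ in a doubled theory with discarding can be written as a pure process $\dmap{\widehat f}$ one of whose outputs is discarded. Thus $\widehat f$ has, in addition to the genuine output of $\Phi$, an auxiliary ``environment'' output, and $\Phi$ is recovered by discarding that auxiliary wire. This is exactly the right-hand shape demanded by~\eqref{eq:prestinespring}, so the whole task reduces to showing that this $\widehat f$ may be taken to be an isometry.

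Next I would feed in the hypothesis that $\Phi$ is causal, i.e.\ that discarding the remaining output of $\Phi$ equals discarding its input (Definition~\ref{def:eq:causqmaps}, equation~\eqref{eq:causqmaps}). Substituting the purification into this equation, the discard coming from causality sits on top of the discard already present in the purification; by the fact (noted just before Proposition~\ref{prop:q-boxclosedness}) that discarding several systems is the same as discarding the single joint system, these two effects merge into one discard of \emph{all} of the outputs of $\widehat f$. The resulting equation, $\discard \circ \dmap{\widehat f} = \discard$, says precisely that $\widehat f$ is itself causal.

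Finally, since $\widehat f$ is a \emph{pure} causal process, the equivalence $(1)\Leftrightarrow(3)$ of Theorem~\ref{thm:purecausaliso} tells us it is an isometry; setting $\widehat U := \widehat f$ yields the desired dilation~\eqref{eq:prestinespring}. I expect no genuine obstacle here: the substantive content has already been absorbed into the purification result and into the pure-process equivalence of causality and isometry, so the only thing to get right is the wire-bookkeeping that lets the two discards coalesce into a single ``discard everything'' effect.
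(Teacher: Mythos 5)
Your proposal is correct and follows essentially the same route as the paper's own proof: purify $\Phi$ via \eqref{eq:purification}, observe that causality of $\Phi$ makes the purifying pure process causal because the two discards merge into one, and then invoke Theorem~\ref{thm:purecausaliso} to conclude it is an isometry. The only cosmetic slip is attributing the purification to Proposition~\ref{prop:q-boxclosedness} rather than to equation~\eqref{eq:purification} stated just after it, which does not affect the argument.
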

\begin{proof} 
We have \eqref{eq:prestinespring} immediately if we let $\widehat U$ be the purification of $\Phi$, as in \eqref{eq:purification}. So, it suffices to prove that $\widehat U$ is an isometry in (\ref{eq:prestinespring}).  By causality of $\Phi$, it follows that $\widehat U$ must also be causal: 
\[
\input{./figures/quantummapcausproof-paper.tikz}
\]
which, by Theorem \ref{thm:purecausaliso} implies that $\widehat U$ is an isometry.    
\end{proof}

\subsection{No-signalling}  

In this section, we'll set doubling aside again and look at the general case of theories with discarding,  and how diagrams in such a theory relate to \textit{causal structures}. A causal structure is simply a directed graph (without cycles) that indicates some collection of events in spacetime, where an edge from $e_1$ to $e_2$ indicates that $e_1$ could have an influence on $e_2$, i.e. $e_2$ is in the \textit{causal future} of $e_1$. For example, the following causal structure:
\ctikzfig{Causalyab}
indicates that Aleks and Bob may have some shared history,  but now they have moved far away from each other, so that they can no longer directly communicate. More precisely, they are so far away that the speed of light prohibits Aleks sending any kind of message to Bob and vice versa.

A process theory is said to be \textit{non-signalling} if each process $\Phi$ in a diagram with a fixed causal structure can only have an influence on processes in the causal future of $\Phi$. We claim that  any causal process theory is non-signalling. 

First, we note that we can associate a causal structure to any circuit diagram by associating boxes to events, and declaring a box $\Phi$ to be in the causal future of another box $\Psi$ if an output of $\Psi$ is wired to an input of $\Phi$:
\ctikzfig{causal-ex-pre-paper}
Note that we can freely add some extra input/output wires to account for the fact that Aleks and Bob can interact with their own processes locally:
\ctikzfig{puricausalcomp1-pre}
However, in this situation, non-signalling dictates that, since Bob does not have access to the output of Aleks' process (as this would require Aleks sending a message faster than the speed of light), Aleks shouldn't be able to influence Bob via his input. To see this is the case, let's see things from Bob's perspective by discarding Aleks' output:
\[  
\input{./figures/puricausalcomp1-paper.tikz}
\]
and see if Bob can learn anything about Aleks' input from his own input-output pair. By causality we have:
\[
\input{./figures/rrr9.tikz}
\]
and hence it follows that:    
\[
\input{./figures/puricausalcomp2.tikz}
\]
So from Bob's perspective, his input-output pair is $\circ$-separated from Aleks' input. Thus no signalling from Aleks to Bob can take place. By symmetry it also follows that Bob cannot signal to Aleks. 

\begin{theorem}
 If a process theory has a discarding process for each type and it satisfies causality, then it is non-signalling. 
\end{theorem}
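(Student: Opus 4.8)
The plan is to reduce the general statement to the two-party argument already sketched for Aleks and Bob and then to upgrade that argument to an arbitrary causal structure. First I would fix a circuit diagram together with its associated causal structure (boxes as events, with a box $\Phi$ in the causal future of a box $\Psi$ whenever an output of $\Psi$ feeds an input of $\Phi$, exactly as in the diagram preceding the statement). The claim to establish is that the marginal behaviour seen by any box $\Phi$ --- obtained by discarding every output wire that does not lead into the causal future of $\Phi$ --- is independent of the choice of any box $\Psi$ that is \emph{not} in the causal future of $\Phi$.

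The heart of the matter is a local collapse lemma: if we take any box $\Psi$ and discard \emph{all} of its outputs, then by the causality equation \eqref{eq:causqmaps} the result is simply discarding all of $\Psi$'s inputs, so $\Psi$ disappears entirely, replaced by $\discard$ on each of its input wires. I would apply this in the two-party picture. Since Bob is spacelike separated from Aleks, none of Aleks' outputs is wired into any of Bob's inputs; hence from Bob's perspective we may discard all of Aleks' outputs. Causality then collapses Aleks' box to discarding on its input, which $\circ$-separates Bob's entire input--output pair from Aleks' input. Consequently nothing Aleks does to his input can change what Bob observes --- this is precisely no-signalling from Aleks to Bob, and by the symmetry of the set-up the same argument rules out signalling from Bob to Aleks.

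To handle the general case I would argue by peeling boxes off the diagram. Given $\Phi$, let $F$ be the set consisting of $\Phi$ together with all boxes in its causal future. Every box outside $F$ has, by acyclicity of the causal structure (Theorem~\ref{thm:circuit-acyclic}) and the definition of causal future, no directed path into $\Phi$; so each of its outputs is either discarded or feeds only other boxes outside $F$. Processing these boxes starting from a ``latest'' one in the partial order induced by the directed edges, and discarding all of its outputs, the local lemma collapses it to discarding on its inputs. Iterating, the entire sub-diagram causally disjoint from $\Phi$ is replaced by discarding effects on whatever wires it used to feed, so the reduced diagram seen by $\Phi$ depends only on boxes in its causal past and on $\Phi$ itself, never on the disjoint region. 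This is exactly the non-signalling condition.

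The main obstacle I anticipate is bookkeeping rather than conceptual: the local collapse lemma is an immediate consequence of \eqref{eq:causqmaps}, but one must order the boxes so that ``discard all outputs, then collapse'' can be applied repeatedly, and check that discarding an output feeding \emph{another} still-present box is consistent with later collapsing that box as well. This consistency is guaranteed by the acyclicity of Theorem~\ref{thm:circuit-acyclic}, which ensures a well-defined ``latest-first'' ordering of the causally disjoint region. Once that ordering is fixed, each step is a single application of causality, so no genuinely hard estimate or new idea is needed beyond the two-party case already worked out in the text.
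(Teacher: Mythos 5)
Your two-party argument is exactly the paper's: discard Aleks' outputs, apply the causality equation \eqref{eq:causqmaps} to collapse his box to discarding on its inputs, conclude that Bob's input--output pair is $\circ$-separated from Aleks' input, and appeal to symmetry. That is in fact all the justification the paper itself supplies before stating the theorem, so on that front you are on solid ground, and your instinct that the general statement needs an explicit induction over the causally disjoint region is a reasonable one.

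However, the induction as you have written it contains a genuine past/future mix-up that makes a key step false. You define $F$ to be $\Phi$ together with its causal \emph{future}, and then claim that every box outside $F$ has no directed path into $\Phi$ and that each of its outputs ``is either discarded or feeds only other boxes outside $F$''. Both claims fail for boxes in the causal \emph{past} of $\Phi$: such boxes lie outside $F$ by your definition, yet they have directed paths into $\Phi$ and their outputs feed into $F$. Taken literally, your latest-first collapse would therefore also swallow the causal past of $\Phi$, destroying the very marginal you are trying to show is invariant --- and indeed your own concluding sentence (``depends only on boxes in its causal past'') contradicts the collapse you just performed. The repair is to take $F$ to be $\Phi$ together with its causal \emph{past}. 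Then for a box $\Psi \notin F$ that is maximal (in the partial order guaranteed by Theorem~\ref{thm:circuit-acyclic}) among boxes outside $F$, no output of $\Psi$ can feed a box in $F$ --- otherwise $\Psi$ would itself be in the causal past of $\Phi$ --- so after discarding the dangling outputs of the diagram, \emph{all} of $\Psi$'s outputs are discarded, \eqref{eq:causqmaps} collapses it, and the iteration goes through exactly as you intend. With that substitution your argument is correct and is a faithful generalisation of the paper's two-party reasoning.
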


\subsection{Category-theoretic counterpart}   

In contrast to some of the previous definitions, one usually encounters the following definition in the first lesson of a course on category theory:    
  
\begin{definition} 
  An object $T$ is called \textit{terminal} if for every object $A$, there exists a unique morphism $! : A \to T$.  
\end{definition}

Its a quick exercise to show that, if a terminal object exists, it is unique (up to isomorphism). A monoidal category is called \textit{causal} if the monoidal unit $I$ is a terminal object, which is the same as saying that every morphism in the category satisfies the causality postulate.

\subsection{Reference and further reading}  
  
The causality postulate was only recently identified as a core principle of quantum theory, by Chiribella, D'Ariano and Perinotti \cite{chiri1}, as one of a series of axioms from which quantum theory was reconstructed.

Stinespring dilation first appeared within the context of C*-algebras \cite{Stinespring}. Our derivation of unitarity and Stinespring dilation from causality is new.  

The first proof of the non-signalling theorem for quantum theory can be found in \cite{Ghirardi}. The derivation of non-signalling from the causality principle  is taken from \cite{Cnonsig}. A similar result is also in \cite{FritzII}, but in those papers more structure is used in order to establish this fact.     

\section{What comes next}    

Simply by stating that a quantum type arises from doubling we derived many typical features of quantum theory.  The motivation for doubling arose from distinguishing quantum types from classical types.  However, we haven't said anything yet about classical types, nor how quantum and classical types interact. Having a grip on this is essential to understanding concepts such as mixing, measurement, and entanglement, and will enable us to give fully-comprehensive descriptions of several quantum protocols as diagrams.  This will be the content of the follow-up paper \cite{CQMII}.   

In the final paper \cite{CQMIII} we will discuss the important quantum theoretical notion of complementarity, as well as a strengthening thereof, which, among many other things, will enable us to derive quantum non-locality.  The corresponding category-theoretic  notions to classicality and complementarity  will involve certain  kinds of algebraic structures within a monoidal category, namely certain Frobenius algebras and Hopf-algebras.    

\bibliography{main}
\bibliographystyle{plain}  

\end{document}